\newcommand{\ma}[1]{\mathbf{ #1 }}         % matrix/vector
\newcommand{\compl}{\mathbb{C}}        % complex-valued numbers
\newcommand{\real}{\mathbb{R}}         % real-valued numbers
\newcommand{\ignore}[1]{}
\newtheorem{theorem}{Theorem}[section]
\newtheorem{lemma}[theorem]{Lemma}
\begin{document}

%
% paper title
% can use linebreaks \\ within to get better formatting as desired
%\title{Worst-Case Robust Sum Rate Maximization for Full-Duplex Bi-Directional MIMO Systems Under Channel Knowledge Uncertainty}

% author names and affiliations
% use a multiple column layout for up to three different
% affiliations
\title{Sum Secrecy Rate Maximization in a Multi-Carrier MIMO Wiretap Channel with Full-Duplex Jamming}
\author{\IEEEauthorblockN{ {Tianyu~Yang,} {Omid~Taghizadeh},  {and} {Rudolf Mathar}}

\IEEEauthorblockA{Institute for Theoretical Information Technology,~RWTH Aachen University,~D-52074 Aachen, Germany \\
Email: \{yang, taghizadeh, mathar\}@ti.rwth-aachen.de}
}

% make the title area
\maketitle

\begin{abstract}
In this paper we address a sum secrecy rate maximization problem for a multi-carrier and MIMO communication system. We consider the case that the receiver is capable of full-duplex (FD) operation and simultaneously sends jamming signal to a potential eavesdropper. In particular, we simultaneously take advantage of the spatial and frequency diversity in the system in order to obtain a higher level of security in the physical layer. Due to the non-convex nature of the resulting mathematical problem, we propose an iterative solution with a guaranteed convergence, based on block coordinate descent method, by re-structuring our problem as a separately convex program. Moreover, for the special case that the transmitter is equipped with a single antenna, an optimal transmit power allocation strategy is obtained analytically, assuming a known jamming strategy. We also study a FD bidirectional secure communication system, where the jamming power can be reused to enhance the sum secrecy rate. The performance of the proposed design is then numerically evaluated compared to the other design strategies, and under different system assumptions.     
{\let\thefootnote\relax\footnote{{Part of this work has been presented in ICC’17-WT07, the 2017 IEEE International Conference on Communications Workshops \cite{7962841}.}}}
\addtocounter{footnote}{-1}\let\thefootnote\svthefootnote
\end{abstract}

\begin{keywords}
Full-duplex, wiretap channel, secrecy rate, jamming, multi-carrier, MIMO.
\end{keywords}

% IEEEtran.cls defaults to using nonbold math in the Abstract.
% This preserves the distinction between vectors and scalars. However,
% if the conference you are submitting to favors bold math in the abstract,
% then you can use LaTeX's standard command \boldmath at the very start
% of the abstract to achieve this. Many IEEE journals/conferences frown on
% math in the abstract anyway.

% no keywords

% For peer review papers, you can put extra information on the cover
% page as needed:
% \ifCLASSOPTIONpeerreview
% \begin{center} \bfseries EDICS Category: 3-BBND \end{center}
% \fi
%
% For peerreview papers, this IEEEtran command inserts a page break and
% creates the second title. It will be ignored for other modes.
\IEEEpeerreviewmaketitle

\section{Introduction} \label{sec:into}
Full-Duplex transceivers are known for their capability to enhance various aspects of wireless communication systems, e.g., achieving higher spectral efficiency and physical layer security, due to the simultaneous transmission and reception capability on the same channel~\cite{6736751}. Nevertheless, such systems suffer from the inherent self-interference (SI) from their own transmitter. Recently, specialized cancellation techniques, e.g., \cite{179789,6319352,Bharadia:2013}, have demonstrated an adequate level of isolation between Tx and Rx directions to facilitate a FD communication and motivated a wide range of related studies, see, e.g., \cite{6832464}. 
%A common idea of such techniques is to subtract the dominant part of the self-interference signal, e.g., the direct SI path or near-end reflections, in the RF analog domain so that the remaining signal can be processed for further interference reduction in the baseband, i.e., digital domain. 
As an interesting use case of such capability, it is known that a FD receiver can significantly enhance the security of a wireless system in physical layer, by simultaneously transmitting a jamming signal to a potential eavesdropper, while receiving the useful information from the legitimate transmitter. 
Note that the information security of the current communication systems are typically addressed by distributing secret keys, using cryptographic approaches. This approach mainly relies on the assumption that a potential eavesdropper has a limited computational power and hence may not break the exchanged secret key. On the other hand, this assumption is increasingly undermined due the advances in the production of digital processors, and leads to a growing interest to ensure the security of information systems in the physical layer. In this regard, the concept of the wiretap physical channel is introduced in \cite{wyner1975wire}, including a legitimate transmitter, namely Alice, a legitimate receiver, namely Bob, as well as an eavesdropper, namely Eve. In this regard, the secrecy capacity of a wiretap channel is defined as the information capacity that can be exchanged among the legitimate users, without being accessible by Eve. The secrecy capacity of the defined wiretap model have been since extensively studied for various systems, regarding performance bounds, channel coding and information theoretic aspects, as well as the system design and resource optimization, see \cite{6739367} and the references therein. \par

The application of FD capability for secrecy rate maximization in a wiretap channel is studied in \cite{6542749} where a FD Bob is capable of transmitting jamming signal, considering a single antenna Alice and a passive eavesdropper. In particular the utilization of a FD jammer reduces the need to external helpers, which are commonly used to degrade the reception capability of the eavesdropper via cooperative jamming \cite{5352243,5638162}, without having to trust external nodes, or demanding additional resources. The studied system \cite{6542749} is then extended to a setup where all nodes are equipped with multiple antennas \cite{7339654,6787008}. Moreover, extensions on the operation the FD Bob is introduced by considering a simultaneous information and jamming transmission, i.e., operating as a base station \cite{6936336}, and considering a FD jamming Bob that simultaneously relays information to a third node, i.e., operating as a jamming relay \cite{7136146}. The consideration of a joint FD operation of both Alice and Bob, i.e., a bi-directional wiretap channel, as well as the possibility of the FD operation at Eve, i.e., an active eavesdropper, is respectively studied in \cite{7414075, 6189999}, and in \cite{7391133}, targeting at sum-secrecy rate maximization in both communication directions. \par
  
The aforementioned works study different system possibilities considering a single-carrier, frequency-flat channel model for all of the physical links. In contrast, the consideration of a frequency selective, multi-carrier system in the context of sum secrecy rate maximization is extensively studied for a wireless system with half-duplex links, see \cite{5961648, 6187727, 5872025, 6516879}. In this respect, extension of the prior works with FD transceivers to a frequency-selective and multi-carrier design is interesting. This is since the usual flat-fading assumption of the previous studies limits the usability of the proposed designs. Furthermore, in a frequency selective setup, the frequency diversity in different subcarriers can be opportunistically used, both regarding the jamming and the desired information link to jointly enhance the resulting secrecy capacity. In this respect, a power-auction game is proposed in \cite{7386164} for maximizing the sum secrecy rate in a FD and multi-carrier system, where all nodes are equipped with a single antenna. However, such studies are not yet extended for a system with multiple-antenna FD transceivers. \par
%The extension of the available works on physical layer security maximization for a system with FD transceiver, to a frequency-selective and multi-carrier design is particularly interesting since, firstly, the usual flat-fading assumption of the prior designs limits the usability of the designs . Secondly, unlike the studied half-duplex systems, where for a the system can make better use of frequency 
In this paper we address a joint power and beam optimization problem for sum secrecy rate maximization in a multi-carrier and MIMO wiretap channel. We consider the case that the receiver is capable of full-duplex (FD) operation and simultaneously sends jamming signal to a potential eavesdropper. In particular, we simultaneously take advantage of the spatial and frequency diversity in the system in order to obtain a higher level of security in the physical layer. In Section~\ref{sec:model}, the system model is presented. In Section~\ref{sec:SSRM} the corresponding optimization strategy is defined. Due to the non-convex nature of the resulting mathematical problem, we propose an iterative solution with a guaranteed convergence, based on block coordinate descent method \cite[Subsection~2.7]{bertsekas1999nonlinear}, by re-structuring our problem as a separately convex program. Moreover, for special case that the transmitter is equipped with a single antenna, an optimal transmit power allocation strategy is obtained analytically, assuming a known jamming strategy. In Section~\ref{sec:extendedsolution} the system is extended to a FD bidirectional communication system. The performance of the proposed design is then numerically evaluated in Section~\ref{sec:simulations}.

\subsection{Mathematical Notation:}
Throughout this paper, column vectors and matrices are denoted as lower-case and {upper-case} bold letters, respectively. {Mathematical expectation, trace}, inverse, determinant, transpose, conjugate {and} Hermitian transpose are denoted by $ \mathbb{E}\{\cdot\}, \; {\text{tr}}(\cdot), \; (\cdot)^{-1}\; |\cdot|, \; (\cdot)^{ T},\; (\cdot)^{*}$ {and} $(\cdot)^{ H},$ respectively. The Kronecker product is denoted by $\otimes$. The identity matrix with dimension $K$ is denoted as ${\ma I}_K$ and ${\text{vec}}(\cdot)$ operator stacks the elements of a matrix into a vector. $\ma{0}_{m \times n}$ represents an all-zero matrix with size $m \times n$. $\bot$ represents the statistical independence. $\text{diag}(\cdot)$ returns a diagonal matrix by putting the off-diagonal elements to zero. The sets of real, non-negative real, complex, natural numbers and the set of all positive semi-definite matrices with Hermitian symmetry are respectively denoted by $\mathbb{R}$, $\mathbb{R}^+$, $\mathbb{C}$, $\mathbb{N}$ and $\mathcal{H}$. $\|\cdot\|_F$ represents Frobenius norm. $\{a\}^{+}$ is equal to $a\in \real$ if $a \geq 0$, and zero otherwise.   

%\bot
%\mathbb{N}

\section{System Model}\label{sec:model}
\begin{figure}[!t] 
    \begin{center}
        \includegraphics[angle=0,width=0.90\columnwidth]{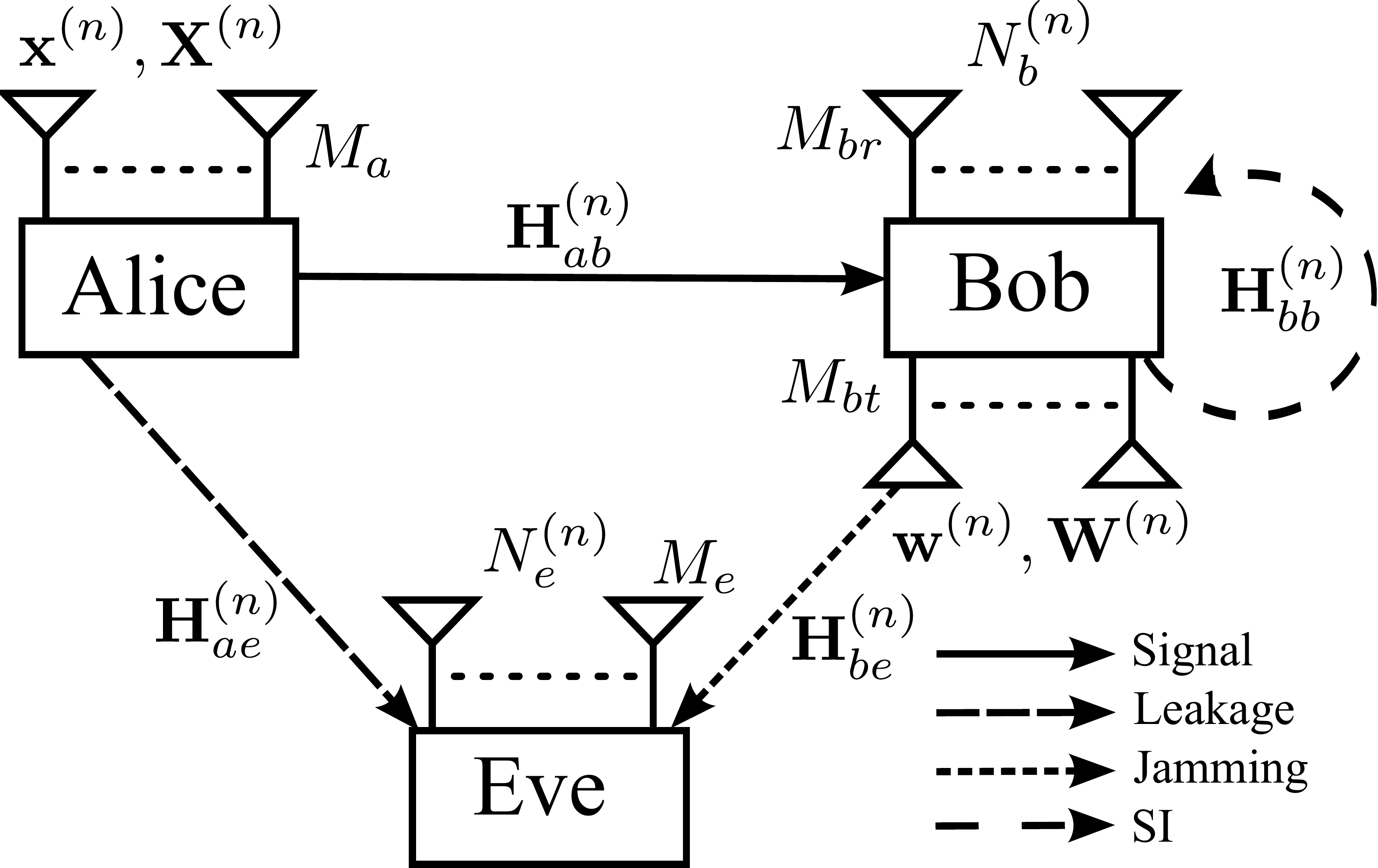}
				        %\fbox{model_rect6.pdf}
    \caption{{{The studied multi-carrier wiretap channel, including Alice (legitimate transmitter), Bob (legitimate receiver) and Eve (eavesdropper). Bob is capable of FD operation. Upper-index $n$ is the subcarrier index. }}} \label{fig:sysmodel}
    \end{center} \vspace{-0mm} 
\end{figure}
We consider a classic wiretap channel: Alice transmits a massage to Bob while Eve intends to eavesdrop the transmitted message. Moreover, we consider a multi-carrier and MIMO system, where Bob is capable of FD operation by sending a jamming signal to Eve while receiving the message from Alice, see Fig.~\ref{fig:sysmodel}. Alice and Eve are respectively equipped with $M_{a}$ and $M_{e}$ transmit and receive antennas, where Bob is equipped with $M_{bt}$ ($M_{br}$) transmit (receive) antennas. In each subcarrier, channels are assumed to follow a quasi-stationary and flat-fading model. In this regard, channels from Alice to Bob (desired communication channel), Alice to Eve (information leakage channel), Bob to Bob (SI channel), and Bob to Eve (jamming channel) are respectively denoted as $\ma{H}_{ab}^{(n)}\in \compl^{M_{br}\times M_{a}}$, $\ma{H}_{ae}^{(n)}\in \compl^{M_{e}\times M_{a}}$, $\ma{H}_{bb}^{(n)}\in \compl^{M_{br}\times M_{bt}}$ and $\ma{H}_{be}^{(n)}\in \compl^{M_{e}\times M_{bt}}$, where $n \in \mathcal{N}$, and $\mathcal{N}$ is the index set of all subcarriers. The transmit signal from Alice can be written as  
\begin{align} \label{model_x_alice}
\ma{x}^{(n)} = \ma{V}^{(n)} \ma{s}^{(n)}, 
\end{align}
where $\ma{s}^{(n)} \sim \mathcal{C} \left( \ma{0}_{d\times1} , \ma{I}_d \right)$ and $\ma{V}^{(n)} \in \compl^{M_a \times d}$ respectively represent the vector of data symbols to be transmitted, and the transmit precoder for subcarrier $n$. Moreover, $d\in\mathbb{N}$ is the number of data streams which are transmitted in parallel. On the other hand, the jamming signal transmitted by Bob is described as $\ma{w}^{(n)} \sim \mathcal{CG} \left( \ma{0}_{{M_{bt}}\times1} , \ma{W}^{(n)} \right)$, where $\ma{W}^{(n)} \in \compl^{M_{bt} \times M_{bt}}$ is the jamming transmit covariance, and $\mathcal{G}$ represents Gaussian distribution. Note that the transmitted jamming signal by Bob impacts the wiretap channel in two opposite directions. Firstly, the jamming signal impacts the signal received by Eve as an additional interference term, and degrades the Alice to Eve channel. Secondly, due to the imperfect SI cancellation, the residual interference terms lead to the degradation of the communication channel between Alice and Bob. Since the aforementioned effects impact the achievable system secrecy in opposite directions, a smart design of the jamming strategy is crucial. The received signal by Eve is written as 
\begin{align} \label{eq:model_y_Eve}
\ma{y}_{e}^{(n)} =  \ma{H}_{ae}^{(n)} \ma{x}^{(n)} + \ma{H}_{be}^{(n)}\ma{w}^{(n)} + \ma{n}_{e}^{(n)}, 
\end{align}
where $\ma{n}_{e}^{(n)} \sim \mathcal{CG} \left( \ma{0}_{M_{e}  \times 1 }, N_e^{(n)} \ma{I}_{M_{e}} \right)$ is the additive white noise on Eve. Similarly, the received signal at Bob is formulated as  
\begin{align} \label{eq:model_y_bob}
\ma{y}_{b}^{(n)} =  \ma{H}_{ab}^{(n)} \ma{x}^{(n)}  + \ma{n}_{b}^{(n)} +  \ma{z}_{b}^{(n)}, 
\end{align}
where $\ma{n}_{b}^{(n)} \sim \mathcal{CG} \left( \ma{0}_{M_{br} \times 1}, N_b^{(n)} \ma{I}_{M_{br}} \right)$ is the additive white noise on Bob, and $\ma{z}_{b}^{(n)} \in \compl^{M_{br}}$ is the baseband representation of the residual SI signal in subcarrier $n$, remaining from the SI cancellation process. 

\subsection{Residual SI model}
We recognize three different sources of error considering the state-of-the-art SI cancellation methods~\cite{6177689}. This includes inaccuracy of the channel state information (CSI) regarding SI path as well as the inaccuracy of the transmit/receive chain elements in the analog domain. In the following we study the impact of each part separately. 

\subsubsection{Linear SI cancellation error}
The estimation accuracy of the CSI in the SI path is limited, particularly in the environments with limited channel coherence time, see~\cite[Subsection~3.4.1]{Jain:2011}, \cite[Subsection~V.C]{6404659}. In this respect, the error of the CSI estimation regarding the SI path is defined as $\ma{E}_{bb}^{(n)}$, such that ${\ma{E}}_{bb}^{(n)} =  {\ma{D}}_{bb}^{(n)} \bar{\ma{E}}_{bb}^{(n)} $ where $\bar{\ma{E}}_{bb}^{(n)}$ is matrix of zero-mean i.i.d. elements with unit variance, and ${\ma{D}}_{bb}^{(n)}$ incorporates spatial correlation, see \cite[Equation~(8),~(9)]{6177689}.  

\subsubsection{Transmitter distortion}

Similar to \cite{6177689}, the inaccuracy of the analog (hardware) elements in the transmit chains, e.g., digital-to-analog converter error, power amplifier noise and oscillator phase noise, are jointly modeled by injecting an additive Gaussian distortion signal term for each transmit chain. This is written as $q_l (t) = {e}_{\text{t},l} (t) + w_l (t)$, see Fig.~\ref{fig:TransceiverAccuracyModel} such that 
\begin{align} 
& {e}_{\text{t},l} (t) \sim \mathcal{CG} \left( 0, \kappa \mathbb{E} \big\{  w_l (t) {w_l (t)}^{*} \big\} \right), \; \nonumber \\  & {e}_{\text{t},l} (t) \bot {w}_{l} (t),\; {e}_{\text{t},l} (t) \bot {e}_{\text{t},l} (t{'}), \; {e}_{\text{t},l} (t) \bot  {{e}_{\text{t},{l^{'}}}} (t),  \label{eq_model_distortion_stat_1}
\end{align}
where $w_l$, $e_{\text{t},l}$, and $q_l \in \compl$ respectively represent the intended (distortion-free) transmit signal, additive transmit distortion, and the actual transmit signal from the $l$-th transmit chain, and $t$ denotes the instance of time\footnote{Note that the signal representation in time domain includes the superposition of signal parts in all subcarriers.}. Moreover, we have ${t \neq t^{'}}, {l \neq l^{'}}$, and $\kappa  \in \real^+$ is the distortion coefficient, relating the collective power of the distortion signal to the intended transmit power. \par

\subsubsection{Receiver distortion}

Similar to the transmit chain, the combined effect of the inaccurate hardware elements, e.g., analog-to-digital converter error, oscillator phase noise and automatic gain control noise, are presented as an additive distortion term $\tilde{q}_l (t) = {e}_{\text{r},l} (t) + u_l (t)$ such that 
\begin{align} 
& {e}_{\text{r},l} (t) \sim \mathcal{CG} \left( 0, \beta \mathbb{E} \big\{  u_l (t){u_l (t)}^{*} \big\} \right),  \nonumber \\  & {e}_{\text{r},l} (t) \bot {u}_{l} (t),\; {e}_{\text{r},l} (t) \bot {e}_{\text{r},l} (t^{'}), \; {e}_{\text{r},l} (t) \bot  {{e}_{\text{r},{l^{'}}}} (t), \label{eq_model_distortion_stat_2}
\end{align}
where $u_l$, $e_{\text{r},l}$, and , $\tilde{q}_l$ respectively represent the intended (distortion-free) receive signal, additive receive distortion, and the received signal from the $l$-th receive antenna. Moreover, $\beta \in \mathbb{R}^+$ holds a similar role as $\kappa$ regarding the distortion signal variance in the receiver side. \par

Note that the defined model for transmit/receive distortion terms follow two important intuitions. Firstly, unlike the usual thermal noise model, the variance of the distortion terms are proportional to the transmit/receive signal power in each chain. Secondly, the distortion signal is statistically independent to the intended transmit/receive signals. Such statistical independence also holds for distortion signal terms at different chains, or at different time instance, i.e., they follow a spatially and temporally white statistics, see \cite[Subsection~II.C]{6177689}, and \cite[Subsection~II.D]{6177689}. Consequently from (\ref{eq_model_distortion_stat_2}) and (\ref{eq_model_distortion_stat_1}), the covariance of the aggregate noise-plus-residual-interference signal on Bob is obtained as 
\begin{align}
\ma{\Sigma}_{b}^{(n)} & = \mathbb{E} \left\{ \left(   \ma{n}_{b}^{(n)} +  \ma{z}_{b}^{(n)}  \right) \left(   \ma{n}_{b}^{(n)} +  \ma{z}_{b}^{(n)}  \right)^H \right\} \nonumber \\ & =  N_{b}^{(n)} \ma{I}_{M_{br}} + \text{trace}\left(\ma{W}^{(n)}\right) \ma{D}_{bb}^{(n)} {\ma{D}_{bb}^{(n)}}^{H}  \nonumber \\
& + \ma{H}_{bb}^{(n)}  \left( \kappa^{(n)} \sum_{n \in \mathcal{N}}   \text{diag} \left(\ma{W}^{(n)} \right)  \right)  {\ma{H}_{bb}^{(n)}}^{H} \nonumber \\ 
& +  \beta^{(n)} \text{diag} \left(  \sum_{n \in \mathcal{N}}  \ma{H}_{bb}^{(n)} \ma{W}^{(n)} {\ma{H}_{bb}^{(n)}}^H \right),
 \label{eq_model_aggregate_interference_covariance}
\end{align}  
%Considering that all parts of the residual SI signal temrs as well as the thermal noise are zero-mean and statistically independent, the covariance of the aggregate noise-plus-residual-interference is obtained from (). 
%\begin{figure*}[!ht]
%\normalsize
%\begin{align} \label{model:aggregate_interference_covariance}
%\ma{\Sigma}_{b}^{(n)} & = \mathbb{E} \left\{ \left(   \ma{n}_{b}^{(n)} +  \ma{z}_{b}^{(n)}  \right) \left(   \ma{n}_{b}^{(n)} +  \ma{z}_{b}^{(n)}  \right)^H \right\} \nonumber \\ & =  \sigma_{b}² \ma{I}_{M_{br}} + \mathbb{E}\left\{\ma{E}_{bb}^{(n)} \ma{W}^{(n)} {\ma{E}_{bb}^{(n)}}^H \right\}  \nonumber \\
%& + \ma{H}_{bb}^{(n)}  \left( \kappa \sum_{n \in \mathcal{N}}   \text{diag} \left(\ma{W}^{(n)} \right)  \right)  {\ma{H}_{bb}^{(n)}}^Ĥ \nonumber \\ 
%& +  \beta \text{diag} \left(  \sum_{n \in \mathcal{N}}  \ma{H}_{bb}^{(n)} \ma{W}^{(n)} {\ma{H}_{bb}^{(n)}}^Ĥ \right),
%. \label{eq_model_aggregate_interference_covariance}
%\end{align}
%\hrulefill
%\vspace*{-0mm}
%\end{figure*}
where $\kappa^{(n)}$ ($\beta^{(n)}$) represents the transmit (receive) distortion coefficient relating the collective power of the intended transmit (receive) signal to the distortion signal variance in the $n$-th subcarrier\footnote{The distortion coefficients associated with different subcarriers may be different if, e.g., the subcarrier spacing is not equal over all bands, or the power spectral density of the distortion signals are not completely flat.}\cite{taghizadeh2017linear}.

It is worth mentioning that the impacts of the discussed inaccuracies, i.e., ${e}_{\text{t},l},{e}_{\text{r},l},{\ma{E}}_{bb}^{(n)}$, become significant for a FD transceiver due to the strong SI channel. For instance, the transmit distortion signals pass through the strong SI channel $\ma{H}_{bb}^{(n)}$ and become comparable to the desired signal from Alice which is passing through a much weaker channel $\ma{H}_{ab}^{(n)}$. Nevertheless, such inaccuracies are ignorable in the other links which do not involve an SI path, i.e., $\kappa \ll 1, \beta \ll 1$ and $\left\| \ma{E}_{bb}^{(n)}\right\|_F \ll \left\| \ma{H}_{bb}^{(n)}\right\|_F$.

\begin{figure}[!t] 
    \begin{center}
        \includegraphics[angle=0,width=0.85\columnwidth]{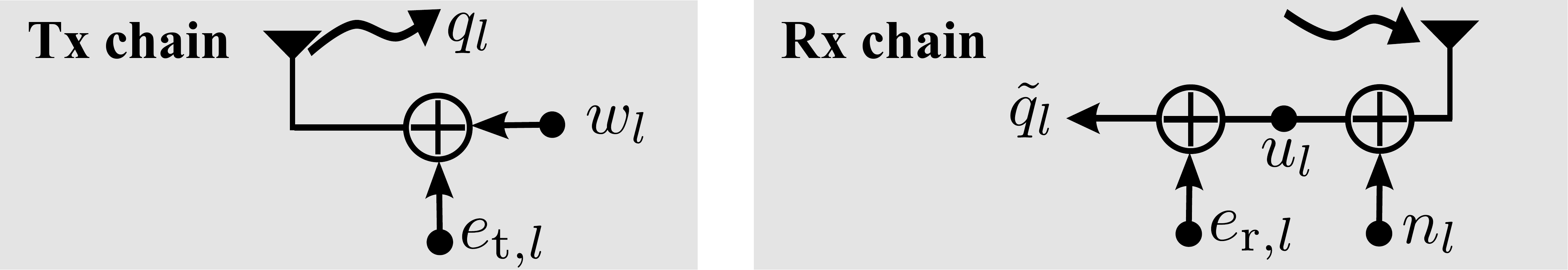}
				        %\fbox{model_rect6.pdf}
    \caption{{{FD transceiver inaccuracy model. The impact of the limited transmitter/receiver dynamic range is modeled as additive distortion terms.}}} \label{fig:TransceiverAccuracyModel}
    \end{center} \vspace{-0mm} 
\end{figure}

\subsection{Transmit power constraints}
It is common to assume that the average transmit power from each device is practically limited due to, e.g., limited battery and energy storage. This is written as 
\begin{align} \label{model:PowerConstraints}
\text{tr} \left( \sum_{n\in \mathcal{N}} \ma{X}^{(n)}  \right) \leq X_{\text{max}}, \;\; \text{tr} \left( \sum_{n\in \mathcal{N}} \ma{W}^{(n)}  \right) \leq W_{\text{max}},
\end{align} 
where 
\begin{align} 
\ma{X}^{(n)}= \mathbb{E} \left\{ \ma{x}^{(n)} {\ma{x}^{(n)}}^H\right\} = \ma{V}^{(n)} {\ma{V}^{(n)}}^H 
\end{align}  
is the transmit covariance from Alice in the subcarrier $n$. Moreover, $X_{\text{max}},W_{\text{max}} \in \real^+$ respectively represent the maximum transmit power from Alice, and the maximum transmit power from Bob, i.e., jamming power. 

\subsection{Sum secrecy capacity}

The secrecy capacity of the defined system, for the subcarrier $n$ is written as
\begin{align} \label{model:I_sec}
\mathcal{I}_{\text{sec}}^{(n)} & = \left\{ \mathcal{I}_{ab}^{(n)} - \mathcal{I}_{ae}^{(n)}  \right\}^{+} \nonumber \\
& = \Bigg\{  \text{log}_2 \left| \ma{I}_d + {\ma{H}_{ab}^{(n)}}\ma{X}^{(n)}{\ma{H}_{ab}^{(n)}}^H  \left( \ma{\Sigma}_b^{(n)}\right)^{-1} \right| \nonumber  \\
& \;\;\;\;\;\;\;\; - \text{log}_2 \left| \ma{I}_d + {\ma{H}_{ae}^{(n)}} \ma{X}^{(n)}{\ma{H}_{ae}^{(n)}}^H  \left( \ma{\Sigma}_e^{(n)}\right)^{-1}\right| \Bigg\}^{+} ,
\end{align} 
where $\mathcal{I}_{\text{sec}}^{(n)} $ is the resulting secrecy rate in subcarrier $n$, and $\mathcal{I}_{ab}^{(n)}$, $\mathcal{I}_{ae}^{(n)}$ respectively represent the information capacity of Alice-Bob and Alice-Eve paths. In the above formulation, $\ma{\Sigma}_e^{(n)} = N_{e}^{(n)} \ma{I}_{M_{e}} + {\ma{H}_{be}^{(n)}}  \ma{W}^{(n)}  {\ma{H}_{be}^{(n)}}^H$ is the covariance of the received noise-plus-interference signal at Eve and $\ma{\Sigma}_b^{(n)}$ is calculated from (\ref{eq_model_aggregate_interference_covariance}). 

Consequently, the sum-secrecy capacity of the defined multi-carrier system is obtained as 
\begin{align} \label{model:I_sum}
\mathcal{I}_{\text{sum}} & = \sum_{n\in \mathcal{N}}   \mathcal{I}_{\text{sec}}^{(n)}.
\end{align}

\subsection{Remarks} 
%\begin{itemize}
%\item In this work we consider a simple transmission operation for Alice, which fits into the usual end-user specifications. An extension of the current scenario to the case where Alice is also capable of jamming will be considered as an extension. 
\textit{i)}~Unlike the information-containing signal from Alice, the jamming signal from Bob contains artificial noise, see \cite[Equation~(5)]{7463025}. This is to prevent Eve to decode the jamming signal. \par
\textit{ii)}~We consider a case where Alice is not contributing in the jamming process. This is to simplify the task of Alice as a usual end-user device. The extension of the considered system to a setup with different jamming strategies is a goal of our future research. \par
\textit{iii)}~In this work we assume the availability of CSI on Alice-Bob, Alice-Eve, and Bob-Eve channels. Other than scenarios where the position of Eve is stationary and known, this assumption does not hold in practice. The sensitivity of the resulting system performance to the CSI accuracy is numerically evaluated in Section~\ref{sec:simulations}. 
%\end{itemize}
%- psedu noise sequence
%- only bob is jamming
%- channel state information 
%
%
%
%\Section{Sum-Secrecy Rate Maximization}
%Optimization problem
%Lemma
%separately convex problem
%algorithm
%\Section{Algorithm Extensions}
    %bob is also jamminng
    %multiple Evesdropper non cooperative or cooperative

\section{Sum Secrecy Rate Maximization} \label{sec:SSRM}
 In this part our goal is to maximize the defined sum secrecy rate of the system over all sub-carriers, see (\ref{model:I_sum}), considering the transmit power constraints for Alice and Bob, see (\ref{model:PowerConstraints}). The corresponding optimization problem is written as 
%\begin{subequations}  \label{eq:global_opt_problem_2}
\begin{align}
\underset{\mathbb{X}, \mathbb{W}}{\text{max}} \;\;  &\;\;  \mathcal{I}_{\text{sum}} \;\;
{\text{s.t.}} \;\;    \text{(\ref{model:PowerConstraints})} \label{OptProblem1}
\end{align} 
%\end{subequations} 
where $\mathbb{X}$ ($\mathbb{W}$) represents the set of $\ma{X}^{(n)}\succeq 0$ ($\ma{W}^{(n)}\succeq 0$), $\forall n \in \mathcal{N}$. By recalling (\ref{model:I_sec}) and known matrix identities \cite[Eq.~(516)]{MCB:08} the defined problem is reformulated as
\begin{subequations}  \label{OptProblem2}
\begin{align}
\underset{\mathbb{X}, \mathbb{W}}{\text{max}} \;\;  &\;\;  \sum_{n \in \mathcal{N}} \Bigg\{  \text{log}_2 \left| \ma{\Sigma}_b^{(n)} + \ma{\Theta}_b^{(n)} \right| - \text{log}_2 \left| \ma{\Sigma}_b^{(n)} \right| \nonumber  \\
& \;\;\;\;\;\;\;\; - \text{log}_2 \left| \ma{\Sigma}_e^{(n)} + \ma{\Theta}_e^{(n)}   \right| + \text{log}_2 \left| \ma{\Sigma}_e^{(n)} \right| \Bigg\}^{+}  \\
{\text{s.t.}} \;\;    & \text{tr} \left( \ma{\Theta}  \right) \leq X_{\text{max}}, \;\; \text{tr} \left( \ma{\Sigma} \right) \leq W_{\text{max}},
\end{align} 
\end{subequations}
where $\ma{\Theta}:= \sum_{n\in \mathcal{N}} \ma{X}^{(n)}$, $\ma{\Theta}_b^{(n)} := {\ma{H}_{ab}^{(n)}}{\ma{X}^{(n)}} {\ma{H}_{ab}^{(n)}}^H$, and $\ma{\Theta}_e^{(n)} := {\ma{H}_{ae}^{(n)}}{\ma{X}^{(n)}}{\ma{H}_{ae}^{(n)}}^H$ are affine compositions of the Alice transmit covariance matrices $\ma{X}^{(n)}$. Moreover, $\ma{\Sigma} = \sum_{n\in \mathcal{N}} \ma{W}^{(n)}$, $\ma{\Sigma}_e^{(n)}$, and $\ma{\Sigma}_b^{(n)}$ are affine compositions of the transmit jamming covariance matrices $\ma{X}^{(n)}$. Nevertheless, the above problem is intractable, due to the $\{.\}^{+}$ operation in the defined secrecy value. Moreover, the maximization of difference of such $\text{log}()$ functions lead to a class of difference-of-convex (DC) problems which is jointly or separately a non-convex problem~\cite{Boyd:2004}. The following two lemmas transform the objective function into a more tractable form. 
\begin{lemma} \label{lemma_positiveOperation}
At the optimality of (\ref{OptProblem2}), the operator $\{.\}^{+}$ has no impact. 
\end{lemma}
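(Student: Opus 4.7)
The plan is to establish the lemma by a simple feasible-point modification argument. The claim is equivalent to showing that there exists an optimal solution of (\ref{OptProblem2}) at which every raw per-subcarrier secrecy rate $\mathcal{I}_{ab}^{(n)} - \mathcal{I}_{ae}^{(n)}$ is already non-negative, so that the outer $\{\cdot\}^{+}$ operator becomes redundant. Once such a solution is exhibited, removing $\{\cdot\}^{+}$ cannot change the objective value at that point.

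Concretely, starting from any optimal $(\mathbb{X}^{\star}, \mathbb{W}^{\star})$, I would introduce the index set $\mathcal{N}_{-} := \{n \in \mathcal{N} : \mathcal{I}_{ab}^{(n)} < \mathcal{I}_{ae}^{(n)}\}$ of ``leaking'' subcarriers and construct a candidate $(\tilde{\mathbb{X}}, \tilde{\mathbb{W}})$ by setting $\tilde{\ma{X}}^{(n)} = \ma{0}$ for every $n \in \mathcal{N}_{-}$, keeping $\tilde{\ma{X}}^{(n)} = \ma{X}^{(n)\star}$ for $n \notin \mathcal{N}_{-}$, and leaving $\tilde{\mathbb{W}} = \mathbb{W}^{\star}$. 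The structural observation that makes this modification clean is that $\ma{\Sigma}_{b}^{(n)}$ in (\ref{eq_model_aggregate_interference_covariance}) and $\ma{\Sigma}_{e}^{(n)}$ depend only on the jamming covariances $\mathbb{W}$, not on the Alice covariances $\mathbb{X}$; therefore altering $\ma{X}^{(n)}$ on a single subcarrier leaves every other subcarrier's secrecy expression entirely untouched.

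For each affected subcarrier $n \in \mathcal{N}_{-}$, substituting $\tilde{\ma{X}}^{(n)} = \ma{0}$ into $\ma{\Theta}_{b}^{(n)}$ and $\ma{\Theta}_{e}^{(n)}$ collapses both pairs of log-determinant terms in (\ref{OptProblem2}a) to zero, so the contribution of that subcarrier becomes exactly $0$ --- precisely the value already assigned by $\{\cdot\}^{+}$ at $(\mathbb{X}^{\star}, \mathbb{W}^{\star})$. For $n \notin \mathcal{N}_{-}$ the contribution is unchanged. Feasibility is preserved because $\ma{X}^{(n)\star} \succeq \ma{0}$ implies $\text{tr}(\tilde{\ma{\Theta}}) \leq \text{tr}(\ma{\Theta}^{\star}) \leq X_{\text{max}}$, and the jamming constraint is untouched. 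Hence $(\tilde{\mathbb{X}}, \tilde{\mathbb{W}})$ is feasible and attains the same objective value, so it is also optimal; at this point every raw per-subcarrier secrecy rate is non-negative by construction and $\{\cdot\}^{+}$ can be dropped without loss. I expect the only point demanding any care to be the verification that the cross-subcarrier coupling through the aggregate distortion terms in (\ref{eq_model_aggregate_interference_covariance}) involves only $\mathbb{W}$ and not $\mathbb{X}$ --- which is immediate from the formula --- after which the argument reduces to a one-line substitution.
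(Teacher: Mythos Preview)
Your argument is correct and shares the same core idea as the paper's proof: both turn off the ``leaking'' subcarriers to show that $\{\cdot\}^{+}$ is redundant at an optimum. The execution differs in a way worth noting, though.

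The paper sets both $\ma{X}^{(n)}=\ma{0}$ and $\ma{W}^{(n)}=\ma{0}$ on each leaking subcarrier, reallocates the freed power to a subcarrier with positive secrecy, and argues this strictly enhances $\mathcal{I}_{\text{sum}}$, yielding a contradiction with optimality. Your construction instead zeros only $\ma{X}^{(n)}$ and leaves $\mathbb{W}$ untouched, producing an alternative optimum with the \emph{same} objective value rather than a strictly larger one. This buys you two things. First, it avoids the cross-subcarrier coupling: by (\ref{eq_model_aggregate_interference_covariance}), every $\ma{\Sigma}_b^{(m)}$ depends on the full collection $\{\ma{W}^{(n)}\}_{n\in\mathcal{N}}$, so the paper's move of zeroing $\ma{W}^{(n)}$ is not purely local and its effect on the other subcarriers would in principle need to be tracked. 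Second, you do not need to argue that reallocated power yields a strict improvement (which tacitly assumes a receptive subcarrier exists). Your explicit check that $\ma{\Sigma}_b^{(n)}$ and $\ma{\Sigma}_e^{(n)}$ depend only on $\mathbb{W}$ is precisely the structural fact that makes zeroing $\ma{X}^{(n)}$ alone a clean, one-subcarrier-at-a-time substitution; this is the tidier route to the same conclusion.
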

\begin{proof}
The operator $\{.\}^{+}$ impacts the problem only when the value inside becomes negative, for at least one of the sub-carriers ${n \in \mathcal{N}}$. In such a situation, Alice and Bob can turn off transmission in the corresponding sub-carrier, i.e., choosing $\ma{X}^{(n)}=\ma{0}, \ma{W}^{(n)}=\ma{0}$, and contribute the reduced power to another sub-carrier with a positive $\mathcal{I}_{\text{sec}}^{(n)}$. Such action results in the enhancement of $\mathcal{I}_{\text{sum}}$ and contradicts the initial optimality assumption.   
\end{proof}

\begin{lemma} \label{lemma_logdetE}
Let $\ma{R}\in\compl^{l \times l}$ be a positive definite matrix. The maximization of the term $-\text{log} \left|\ma{R} \right|$ is equivalent in terms of the optimal $\ma{R}$ and objective value to
\begin{align} \label{lemma_logdetE_equivalance}
\underset{\ma{R}\succ 0, \ma{Q}\succ 0}{ \text{max}} \text{log}\left|\ma{Q} \right| - \text{tr}\left(  \ma{Q} \ma{R} \right)   + l,
\end{align} 
where $\ma{Q} \in \compl^{l \times l}$.
\end{lemma}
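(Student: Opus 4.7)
The plan is to establish the equivalence by partial maximization: fix $\ma{R}\succ 0$ in the proposed joint program and carry out the inner maximization over $\ma{Q}$ in closed form, showing that the optimal value of this inner problem equals $-\log|\ma{R}|$. Once this is shown, the outer maximization over $\ma{R}$ of the partially maximized objective is identical to the original maximization of $-\log|\ma{R}|$, which yields the same optimal $\ma{R}$ and the same objective value.

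For the inner problem, first I would note that the objective $f(\ma{Q}) := \log|\ma{Q}| - \mathrm{tr}(\ma{Q}\ma{R}) + l$ is concave in $\ma{Q}$ on the positive-definite cone, since $\log|\ma{Q}|$ is strictly concave on $\mathcal{H}_{++}$ and $-\mathrm{tr}(\ma{Q}\ma{R})$ is linear. This guarantees that any stationary point is the unique global maximizer. Next, using the standard matrix-calculus identities $\partial \log|\ma{Q}|/\partial \ma{Q} = \ma{Q}^{-1}$ and $\partial\, \mathrm{tr}(\ma{Q}\ma{R})/\partial \ma{Q} = \ma{R}$ (for Hermitian $\ma{Q}$), the first-order optimality condition reads $\ma{Q}^{-1} - \ma{R} = \ma{0}$, so the unique maximizer is $\ma{Q}^\star = \ma{R}^{-1}$, which indeed lies in the feasible cone since $\ma{R}\succ 0$.

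Substituting $\ma{Q}^\star = \ma{R}^{-1}$ into $f$ gives
\begin{align}
f(\ma{Q}^\star) = \log\bigl|\ma{R}^{-1}\bigr| - \mathrm{tr}\bigl(\ma{R}^{-1}\ma{R}\bigr) + l = -\log|\ma{R}| - l + l = -\log|\ma{R}|,
\end{align}
so the partially maximized objective coincides exactly with the original $-\log|\ma{R}|$. Therefore the two maximization problems have the same optimal value, and any $\ma{R}^\star$ that maximizes $-\log|\ma{R}|$ (paired with $\ma{Q}^\star = (\ma{R}^\star)^{-1}$) is optimal for the joint program, and vice versa.

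There is no serious obstacle here; the only point requiring mild care is justifying the use of the unconstrained first-order condition despite the positive-definiteness constraint on $\ma{Q}$. This is handled by concavity together with the observation that $f(\ma{Q})\to -\infty$ as $\ma{Q}$ approaches the boundary of the PSD cone (since $\log|\ma{Q}|\to -\infty$ when $\ma{Q}$ becomes singular), so the maximizer necessarily lies in the interior $\mathcal{H}_{++}$ where the gradient condition is both necessary and sufficient.
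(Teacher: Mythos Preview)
Your proposal is correct and follows essentially the same argument as the paper: fix $\ma{R}$, use concavity and the first-order condition to obtain $\ma{Q}^\star=\ma{R}^{-1}$, and substitute back to recover $-\log|\ma{R}|$. Your version is merely more detailed in justifying the interior maximizer via the boundary behavior of $\log|\ma{Q}|$.
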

\begin{proof}
Since (\ref{lemma_logdetE_equivalance}) is an unconstrained concave maximization over $\ma{Q}$ for a fixed $\ma{R}$, the corresponding optimal $\ma{R}$ is obtained by equalizing the derivative of the objective function to zero. In this way we obtain $\ma{Q}^\star = \ma{R}^{-1}$. This equalizes the objective in (\ref{lemma_logdetE_equivalance}) to the term $-\text{log} \left|\ma{R} \right|$ at the optimality of $\ma{Q}$, which concludes the proof, see also \cite[Lemma~2]{5962875}. 
\end{proof}
Via the utilization of the defined lemmas, the problem (\ref{OptProblem2}) can be restructured as   
\begin{subequations}  \label{OptProblem3}
\begin{align}
\underset{\mathbb{X}, \mathbb{W}, \mathbb{Q}, \mathbb{L}}{\text{max}} \;\;  &\;\;  \sum_{n \in \mathcal{N}} \Bigg(  \text{log} \left| \ma{\Sigma}_b^{(n)} + \ma{\Theta}_b^{(n)} \right| + \text{log} \left| \ma{\Sigma}_e^{(n)} \right|    \label{OptProblem3_a}  \\
&  - \text{tr}\left( \ma{Q}^{(n)}   \ma{\Sigma}_b^{(n)}   \right) -    \text{tr}\left(  \ma{T}^{(n)}  \left( \ma{\Sigma}_e^{(n)} + \ma{\Theta}_e^{(n)}  \right) \right) \nonumber \\
&  + \text{log} \left| \ma{Q}^{(n)} \right| +  \text{log} \left| \ma{T}^{(n)} \right| \Bigg)  \\ 
{\text{s.t.}} \;\;\;\;    & \text{tr} \left( \ma{\Theta}  \right) \leq X_{\text{max}}, \;\; \text{tr} \left( \ma{\Sigma} \right) \leq W_{\text{max}},
\end{align} 
\end{subequations}
where $\text{log}()$ is natural logarithm, $\ma{Q}^{(n)} \in \compl^{M_{br}\times M_{br}}$ and $\ma{T}^{(n)} \in \compl^{M_{e}\times M_{e}}$ are introduced as auxiliary variables, and the sets $\mathbb{Q}$ and $\mathbb{T}$ are defined similar to that of $\mathbb{X}$, $\mathbb{W}$. Following Lemma~\ref{lemma_logdetE} optimal values of the auxilliary variables are obtained as 
\begin{align}
{\ma{T}^{(n)}}^\star &= \left( \ma{\Sigma}_e^{(n)} + \ma{\Theta}_e^{(n)}  \right)^{-1}, \label{solution_auxilliary_T} \\
{\ma{Q}^{(n)}}^\star &= \left( \ma{\Sigma}_b^{(n)} \right)^{-1}.  \label{solution_auxilliary_Q}
\end{align}

Please note that the obtained problem structure (\ref{OptProblem3}) is not a jointly convex problem. Nevertheless, it is a convex problem separately over the sets $\mathbb{X}, \mathbb{W}$ and $\mathbb{Q}, \mathbb{T}$. Hence, we follow an iterative coordinate ascend update where in each iteration the original problem is solved over a subset of the original variable space, see \cite[Subsection~2.7]{bertsekas1999nonlinear}. Firstly, the problem is solved over the variable sets $\mathbb{X}, \mathbb{W}$, assuming the auxiliary variables $\ma{Q}^{(n)}$ and $\ma{T}^{(n)}$ are fixed. This results in a convex sub-problem, where the optimum point is efficiently obtained using the MAX-DET algorithm \cite{Vandenberghe:1998}. Secondly, the problem is solved over the auxiliary variable sets $\mathbb{Q}, \mathbb{T}$, where the optimum solution is obtained in closed-form from (\ref{solution_auxilliary_T}), (\ref{solution_auxilliary_Q}). This procedure is repeated until a stable point is obtained, see Algorithm~\ref{mainalgorithm}. Please note that due to the monotonic increase of the objective (\ref{OptProblem3_a}) in each optimization iteration, and the fact that the system secrecy capacity is bounded from above, the proposed algorithm leads to a necessary convergence. The average convergence behavior of Algorithm~\ref{mainalgorithm} is numerically studied in Section~\ref{sec:simulations}.
% \subsection{Algorithm convergence}
% As mentioned, a convex sub-problem is solved in each step of the proposed algorithm, which results in a monotonic increase of the objective (\ref{}) in each iteration. Due to the fact that the system secrecy capacity is bounded from above, the proposed algorithm leads to a necessary convergence. The convergence behaviour of the pro
              %

\subsection{Initialization}
In this section we discuss two initialization methods for the optimization problem in (\ref{OptProblem3}). 
\subsubsection{Uniform covariance with equal power initialization}
This simple initialization method initialize the covariance matrix of transmit signal by uniform covariance matrix with equal power, i.e., $\ma{Q} \leftarrow \epsilon\ma{I}$. In our case $\ma{Q}$ represents any matrix of $\ma{X}^{(n)},\ma{W}^{(n)},\forall n \in \mathcal{N}$.
\subsubsection{Optimal spatial beam initialization}
This initialization method aims to obtain optimal spatial beam, where the transmit signal is orientated to the desired receiver and prevent signal leakage to the undesired directions. This is defined as the following maximization
\begin{align}
    \underset{\ma{Q}\in \mathcal{H}}{\text{max}} \;\;  &\;\;  \frac{\text{tr}(\ma{F}\ma{Q}{\ma{F}}^{H}) + \nu_f}{\text{tr}(\ma{G}\ma{Q}{\ma{G}}^{H}) + \nu_g}, \;\;
    {\text{s.t.}} \;\;  \text{tr}(\ma{Q}) = 1, \label{OptIni}
\end{align}
where $\ma{Q}$ represents the normalized covariance matrix of the transmit signal, $\ma{F}$ and $\ma{G}$ are the desired and undesired channels, $\nu_f, \nu_g$ are the noise variances at the desired and undesired receivers, respectively. An optimal solution to (\ref{OptIni}) is obtained as 
\begin{align}
    \text{vec}(\ma{Q}^{\star \frac{1}{2}}) = \mathcal{P}_{\text{max}} \left( \left(\ma{I}\otimes\ma{G}^H \ma{G}+\nu_g\ma{I}\right)^{-1}  \left(\ma{I}\otimes\ma{F}^H \ma{F}+\nu_f\ma{I}\right) \right),
\end{align}
where $\mathcal{P}_{\text{max}}(\cdot)$ calculates the dominant eigenvector. The transmit power is uniform equally allocated. Please note that the above approach is applied separately for the initialization of the covariance matrix of information signal $(\ma{X}^{(n)})$ and jamming signal $(\ma{W}^{(n)})$ in each subcarrier. Specifically, for the design of $\ma{X}^{(n)}$, we set $\ma{F} \leftarrow \ma{H}_{ab}^{(n)}$ and $\ma{G} \leftarrow \ma{H}_{ae}^{(n)}$. For the design of $\ma{W}^{(n)}$ we set $\ma{F} \leftarrow \ma{H}_{be}^{(n)}$. The choice of $\ma{G}$ for $\ma{W}^{(n)}$ is related to the impact of distortion terms on Bob, which reflects the effect of the residual self-interference. The distortion power at Bob in $n$-th subcarrier can be written as
\begin{align}
    &\text{tr}\left(\kappa^{(n)}\ma{H}^{(n)}_{bb}\text{diag}\left(\ma{W}^{(n)}\right){\ma{H}^{(n)}_{bb}}^H\right) \nonumber\\ &+ \text{tr}\left(\beta^{(n)}\text{diag}\left(\ma{H}^{(n)}_{bb}\ma{W}^{(n)}{\ma{H}^{(n)}_{bb}}^H\right)\right)\nonumber \\
    &+  \text{tr}\left(\text{tr}\left(\ma{W}^{(n)}\right)\ma{D}_{bb}^{(n)}{\ma{D}_{bb}^{(n)}}^H\right)\nonumber \\
    & = \text{tr}\left(\tilde{\ma{H}}_{bb}^{(n)}\ma{W}^{(n)}\right), 
\end{align}
where $\tilde{\ma{H}}_{bb}^{(n)} = \kappa^{(n)}\text{diag}\left({\ma{H}^{(n)}_{bb}}^H\ma{H}^{(n)}_{bb}\right)+\beta^{(n)}{\ma{H}^{(n)}_{bb}}^H\ma{H}^{(n)}_{bb}+\text{tr}\left(\ma{D}_{bb}^{(n)}{\ma{D}_{bb}^{(n)}}^H\right)\ma{I}_{M_{bt}}$, which consequently results in the choice of $\ma{G}\leftarrow{\left(\tilde{\ma{H}}_{bb}^{(n)}\right)}^{\frac{1}{2}}$.

Please note that the optimization problem does not necessarily converge to the global optimum point based on any of above two initialization methods. In Algorithm~\ref{mainalgorithm} we apply the uniform covariance with equal power initialization. The performance of two initialization methods and the optimality gap are numerically compared and analyzed in Subsection~\ref{algorithmAnalysis} by examining multiple random initialization.

\subsection{Analytical computational complexity}
To analyze the computational complexity of arithmetic operations we consider the floating-point operations (FLOP)s \cite{hunger2005floating}. One FLOP represents a complex multiplication or a complex summation. Then the arithmetic operations for the calculation of $\mathbb{Q}_l$ and $\mathbb{T}_l$ via (\ref{solution_auxilliary_T}), (\ref{solution_auxilliary_Q}) with the inverse terms via Cholesky decomposition in Algorithm \ref{mainalgorithm} are in total
\begin{align*}
    \mathcal{O}&\Big(\gamma N \Big(M^3_e+M^3_{br}+M_{bt}Me(2M_{bt}+M_e) \\
    &+M_aM_e(2M_a+M_e)+M_{bt}M_{br}(2M_{bt}+3M_{br})\Big)\Big)
\end{align*}
FLOPs\cite{hunger2005floating}, where $\gamma$ is the total required iterations number until convergence. 

More dominant computational complexity of Algorithm \ref{mainalgorithm} is incurred in the steps of the determinant maximization, see Algorithm \ref{mainalgorithm}, Step 7. A general form of a MAX-DET problem is defined as
\begin{align}
    \underset{\ma{z}}{\text{min}} \; \ma{p}^{T}\ma{z} + \text{log}\left| \ma{Y}(\ma{z})^{-1}\right|, \; \text{s.t.} \; \ma{Y}(\ma{z})\succ, \ma{F}(\ma{z}) \succeq 0,      
    \label{MAX_DET}
\end{align}
where $\ma{z} \in \mathbb{R}^n$, and $\ma{Y}(\ma{z}) \in \mathbb{R}^{n_Y \times n_Y} := \ma{Y}_0 + \sum^n_{i=1}z_i\ma{Y_i}$ and $\ma{F}(\ma{z})\in \mathbb{R}^{n_F \times n_F}:= \ma{F}_0 + \sum^n_{i=1}z_i\ma{F}_i$. An upper bound to the computational complexity of the above problem is given as 
\begin{equation}\label{computationalComplexity}
    \mathcal{O}\left(\gamma \sqrt{n}(n^2+n^2_Y)n^2_F\right),
\end{equation}
where $\gamma$ is the total number of the required iterations until convergence, see \cite[Section 10]{Vandenberghe:1998}. In our problem $n=N(M^2_a + M^2_{bt})$ representing the dimension of real valued scalar variable space, and $n_Y = N(M_e + M_{br})$ and $n_F = N(M_a + M_{bt}) + 2$, representing the dimension of the determinant operation and the constraints space, respectively.

Please note that the above analysis only shows how the bounds on computational complexity are related to different problem dimensions. In practice the actual computational load may vary depending on the structure simplifications and used numerical solvers.

\begin{algorithm}[H] 
\small{	\begin{algorithmic}[1] 
  %\SetAlgoLined
\State{$\ell \leftarrow  {0}  ;  \;\;\;\; \text{set iteration number to zero}$}
\State{$\mathbb{X}_0 \leftarrow  \epsilon \ma{I}_{M_a} ; \;\;\;\; \text{initialization:~equal power in different sub-carriers}$}
            and uniform spatial beam
\State{$\mathbb{W}_0 \leftarrow   \ma{0}_{M_{bt}} ; \;\;\;\; \text{initialization:~initialize with zero jamming power}$}
\State{$\mathbb{Q}_{0}, \mathbb{T}_{0} \leftarrow   \ma{0} ; \;\;\;\; \text{initialization with zero matrices}$}

\Repeat 
\State{$\ell \leftarrow  \ell + 1;$}
\State{$\mathbb{X}_{\ell}, \mathbb{W}_{\ell} \leftarrow \text{solve MAX-DET (\ref{OptProblem3}), with \cite{Vandenberghe:1998}}$}
\State{$\mathbb{Q}_{\ell}, \mathbb{T}_{\ell} \leftarrow \text{calculate~(\ref{solution_auxilliary_T})~and~(\ref{solution_auxilliary_Q})}$}

\Until{$\text{a stable point, or maximum number of $\ell$ reached}$}

\State{\Return$\left\{\mathbb{X}_{\ell},\mathbb{W}_{\ell},\mathbb{Q}_{\ell},\mathbb{T}_{\ell}\right\}$}

  \end{algorithmic} } 
 \caption{\small{Iterative coordinate ascend method for sum secrecy rate maximization} } \label{mainalgorithm}
\end{algorithm}

\subsection{Optimal power allocation on Alice ($M_a = 1$)} 	
In this part we study a special case where Alice is equipped with a single antenna, and hence the problem of finding $\ma{X}^{(n)}$ reduces into a transmit power allocation problem among different sub-carriers. In this regard, we focus on finding an optimal transmit strategy from Alice assuming a known jamming strategy. This approach is particularly interesting where a joint design for Bob and Alice is not possible due to, e.g., feedback delay and overhead, computation complexity. Moreover, the obtained power allocation solution may also serve as a basis for a low-complexity sub-optimal design for a general case where Alice is facilitated with multiple antennas. 

The resulting power allocation problem on Alice, assuming a known jamming covariance is formulated as 
\begin{align} \label{OptProblem4}
\underset{{X}^{(n)}\geq0, \; \forall n\in\mathcal{N} }{\text{max}} \;\;  \;\;  \sum_{n\in\mathcal{N}} f_n\left({X}^{(n)}\right)   \;\;  {\text{s.t.}} \;\;   \sum_{n\in\mathcal{N}} X^{(n)} \leq X_{\text{max}}, 
\end{align} 
where
\begin{align} \label{f_n_definition}
f_n\left(X^{(n)}\right) := \text{log} \left( \frac{1+ \alpha^{(n)}X^{(n)}}{1+ \beta^{(n)}X^{(n)}}  \right).
\end{align} 
In the above formulation $f_n\left(X^{(n)}\right)$ is the realized secrecy capacity in the sub-carrier $n$, $\alpha^{(n)}:= {\ma{H}_{ab}^{(n)}}^H \left(\ma{\Sigma}_b^{(n)}\right)^{-1}{\ma{H}_{ab}^{(n)}}$ and $\beta^{(n)}:= {\ma{H}_{ae}^{(n)}}^H \left(\ma{\Sigma}_e^{(n)}\right)^{-1}{\ma{H}_{ae}^{(n)}}$, where $\alpha^{(n)}, \beta^{(n)} \in \real^{+}$. Note that a similar power allocation approach for sum secrecy rate maximization, in the context of HD broadcast multi-carrier systems is studied in \cite{4652697, 5872025}. Nevertheless, due to the presence of FD jamming in our system, the impact of the residual SI at Bob as well as the impact of the received jamming signal at Eve are respectively incorporated in $\alpha^{(n)}$ and $\beta^{(n)}$. To obtain an optimal solution of (\ref{OptProblem4}) we consider the Lagrangian function of the objective function in (\ref{OptProblem4}):
\begin{align}
    \mathcal{L}\left(\mathbb{X},\lambda,\boldsymbol{\tau}\right) = &\sum_{n\in \mathcal{N}}f_n(X^{(n)}) + \lambda\left(X_\text{max} - \sum_{n\in \mathcal{N}} X^{(n)} \right) \nonumber \\
    &+ \sum_{n\in \mathcal{N}} \tau^{(n)}X^{(n)},
\end{align}
where $\lambda$ and $\boldsymbol{\tau}$, which is the set of $\tau^{(n)}, n\in \mathcal{N}$, are the Lagrange multipliers for the inequality constrains.
% Referring to a similar mathematical structure as given in \cite[Subsection~III.A]{5872025}, in the following, we summarize necessary steps to obtain an optimal solution to (\ref{OptProblem4}).  \par

It is observable from (\ref{f_n_definition}) that for $\alpha^{(n)} \leq \beta^{(n)}$ we have ${X^{(n)}}^\star = 0$. On the other hand, for $\alpha^{(n)} > \beta^{(n)}$, the function $f_n\left(X^{(n)}\right)$ is a concave and increasing composition of a concave and increasing function in $X^{(n)}$, and hence is a concave function, see \cite[Subsection~3.2.4]{Boyd:2004}. As a result, we obtain the necessary and sufficient optimality conditions of (\ref{OptProblem4}) by via the corresponding Karush-Kuhn-Tucker (KKT) conditions:
\begin{subequations}
\begin{align}
    \frac{\partial \mathcal{L}\left(\mathbb{X},\lambda,\boldsymbol{\tau}\right)}{\partial X^{(n)}} &= 0 \;, \quad \forall n \in\mathcal{N},\label{kkt_a} \\
    X^{(n)} &\geq 0 \;, \quad \forall n \in\mathcal{N}, \\
    X_{\text{max}} - \sum_{n\in \mathcal{N}}X^{(n)} &\geq 0 \;, \\
    \lambda &\geq 0 \;, \\
    \tau^{(n)} &\geq 0 \;, \quad \forall n \in\mathcal{N}, \\
    \lambda\left(X_\text{max} - \sum_{n \in \mathcal{N}}X^{(n)}\right) &= 0 \;, \\
    \tau^{(n)}X^{(n)} &= 0 \;, \quad \forall n \in \mathcal{N} \label{kkt_g}. 
\end{align}
\end{subequations}
The following lemma reveals an important property of the allocated power values at the optimality.   
\begin{lemma}
Let $\mathcal{N}_0$ be the set of sub-carriers with zero allocated power at the optimality, i.e., ${X^{(n)}}^\star = 0, \forall n\in \mathcal{N}_0$. Then we have 
\begin{align} \label{lemma_3_eq}
\frac{\partial f_{n} \left({X^{(n)}}^\star \right)}{\partial X^{(n)}} = \lambda, \;\; \forall n \in \mathcal{N}\setminus \mathcal{N}_0.
\end{align}
\end{lemma}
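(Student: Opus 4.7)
The plan is to derive the claim directly from the Karush--Kuhn--Tucker conditions (\ref{kkt_a})--(\ref{kkt_g}) that have just been stated, which the paper has already argued are both necessary and sufficient because the objective is a sum of concave terms (for the active subcarriers $\alpha^{(n)} > \beta^{(n)}$) under an affine constraint set.

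First, I would compute the stationarity equation (\ref{kkt_a}) explicitly. Differentiating the Lagrangian with respect to $X^{(n)}$ isolates a single subcarrier term, giving
\begin{align}
\frac{\partial \mathcal{L}(\mathbb{X},\lambda,\boldsymbol{\tau})}{\partial X^{(n)}} \;=\; \frac{\partial f_n\!\left(X^{(n)}\right)}{\partial X^{(n)}} \;-\; \lambda \;+\; \tau^{(n)} \;=\; 0,
\end{align}
so that at the optimum $\partial f_n/\partial X^{(n)} = \lambda - \tau^{(n)}$ for every $n\in\mathcal{N}$.

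Next, I would invoke the complementary slackness condition~(\ref{kkt_g}), namely $\tau^{(n)} X^{(n)} = 0$. For any index $n \in \mathcal{N}\setminus\mathcal{N}_0$, the definition of $\mathcal{N}_0$ guarantees ${X^{(n)}}^\star > 0$, and hence the corresponding multiplier must vanish, $\tau^{(n)} = 0$. Substituting this back into the stationarity relation above yields the asserted equality
\begin{align}
\frac{\partial f_{n}\!\left({X^{(n)}}^\star \right)}{\partial X^{(n)}} \;=\; \lambda, \qquad \forall n \in \mathcal{N}\setminus\mathcal{N}_0,
\end{align}
which proves the lemma.

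Because the argument is essentially a one-line bookkeeping exercise with the KKT conditions already in hand, there is no serious obstacle. The only point that merits a brief remark is that the KKT system is guaranteed to characterize the optimum, which follows from the concavity of each active $f_n$ (established in the paragraph preceding the KKT list via the composition rule~\cite[Subsection~3.2.4]{Boyd:2004}) and the linearity of both the sum-power constraint and the nonnegativity constraints, so that Slater's condition holds trivially (e.g.\ via any strictly positive allocation respecting $X_{\mathrm{max}}$).
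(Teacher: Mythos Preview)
Your proposal is correct and follows essentially the same approach as the paper: both arguments compute the stationarity condition (\ref{kkt_a}) to obtain $\partial f_n/\partial X^{(n)} - \lambda + \tau^{(n)} = 0$, then use complementary slackness (\ref{kkt_g}) to conclude $\tau^{(n)}=0$ whenever ${X^{(n)}}^\star>0$, yielding the claim. Your additional remark on Slater's condition is fine but not present in the paper's terse two-line proof.
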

\begin{proof}
For the sub-carrier with ${X^{(n)}}^\star > 0$, we have $\tau^{(n)}=0$, due to (\ref{kkt_g}). Moreover, from (\ref{kkt_a}) we calculate $\frac{\partial f_{n} \left({X^{(n)}}^\star \right)}{\partial X^{(n)}} - \lambda + \tau^{(n)} = 0$. The two aforementioned arguments conclude the proof.
\end{proof}
The above lemma follows the interesting intuition that for the sub-carriers with positive allocated power, the slope of the objective function should be equal. This is expected, since if a slope of the objective is not equal for different sub-carriers, we can take power from the sub-carrier with smaller slope and reallocate it to the sub-carrier with a higher slope in the objective. 

From (\ref{lemma_3_eq}) we obtain a water-filling solution structure which can be expressed as in (\ref{Waterfilling_solution}), where $\lambda>0$ holds the concept of water level, c.f.~\cite[Equation~(17)]{5872025}. This identity shows that at the optimum values of ${X^{(n)}}^\star$ can be uniquely calculated for all sub-carriers, once the value of $\lambda^{\star}$ is obtained. Moreover we have 
\begin{align} \label{lambda_limits}
    0 \leq \lambda^{\star} \leq \left( \underset{n}{\text{max}} \frac{\alpha^{(n)} - \beta^{(n)} }{\left( 1+ \alpha^{(n)} X_{\text{max}} \right) \left( 1+ \beta^{(n)} X_{\text{max}} \right)} \right) =: \lambda_{\text{max}},
\end{align} 
which provides a feasible range for $\lambda^{\star}$. Hence, by choosing $\lambda$ as a search variable and performing a bi-section search, the optimal power allocation solution can be obtained with a water-filling procedure, see Algorithm~\ref{waterFilling} for the detailed procedure.

%\begin{align} \label{Waterfilling_solution}
%{X^{(n)}}^\star = \frac{1}{2}\left\{ -\left( \frac{1}{\beta^{(n)}} + \frac{1}{\alpha^{(n)}} \right) + \sqrt{ \left( \frac{1}{\beta^{(n)}} + \frac{1}{\alpha^{(n)}} \right)^2 -4 \left( \frac{1}{\alpha^{(n)} \beta^{(n)}}  -\lambda\left( \frac{1}{\beta^{(n)}} - \frac{1}{\alpha^{(n)}} \right)\right) }\right\}^{+}
%\end{align}   
%where  
\begin{figure*}[!ht]
\normalsize
\begin{align} \label{Waterfilling_solution}
{X^{(n)}}^\star = \frac{1}{2}\left\{ -\left( \frac{1}{\beta^{(n)}} + \frac{1}{\alpha^{(n)}} \right) + \sqrt{ \left( \frac{1}{\beta^{(n)}} + \frac{1}{\alpha^{(n)}} \right)^2 -4 \left( \frac{1}{\alpha^{(n)} \beta^{(n)}}  -\frac{1}{\lambda^\star}\left( \frac{1}{\beta^{(n)}} - \frac{1}{\alpha^{(n)}} \right)\right) }\right\}^{+}
\end{align}   
\hrulefill
\vspace*{-0mm}
\end{figure*}

\begin{algorithm}
\caption{Water-filling optimization algorithm based on binary search}\label{waterFilling}
\begin{algorithmic}[1]
%\State $\lambda \gets 0$
%\State $X^{(n)} \gets $ see \ref{eq:optOfX}
%\State $T \gets \sum^{N}_{n=1}[X^{(n)}]^{+}$
%\If {$(X_{max} - T) \geq 0$} 
    %\State $X^{(n)}{}\opt{} \gets X^{(n)}$
%\Else 
    %\State $\lambda_{\text{max}} \gets$ see \ref{eq:lambdaMax}
    \State $h \gets \lambda_{\text{max}}$, see (\ref{lambda_limits})
    \State $l \gets 0$, see (\ref{lambda_limits})
    \State \textbf{repeat}
        \State \qquad $\lambda \gets (h+l) / 2$
        \State \qquad $X^{(n)} \gets $ see (\ref{Waterfilling_solution})
        \State \qquad $\tilde{X} \gets \sum_{n \in \mathcal{N}}X^{(n)}$
        \State \qquad \textbf{if} $(X_{\text{max}} - \tilde{X}) < 0$ \textbf{then}
            \State \qquad \qquad $l \gets \lambda$
        \State \qquad \textbf{else}
            \State \qquad \qquad $h \gets \lambda$
        \State \qquad \textbf{end if}
    \State \textbf{until} $0 \leq X_{\text{max}} - \tilde{X} < \epsilon_0$ 
    %\State $X^{(n)}{}\opt{} \gets X^{(n)}$
%\EndIf
\State \textbf{return} $X^{(n)}$
\end{algorithmic}
\end{algorithm}

\section{Secure Bidirectional Full-Duplex Communication} 
\label{sec:extendedsolution}
\begin{figure}[!t] 
    \begin{center}
        \includegraphics[angle=0,width=0.90\columnwidth]{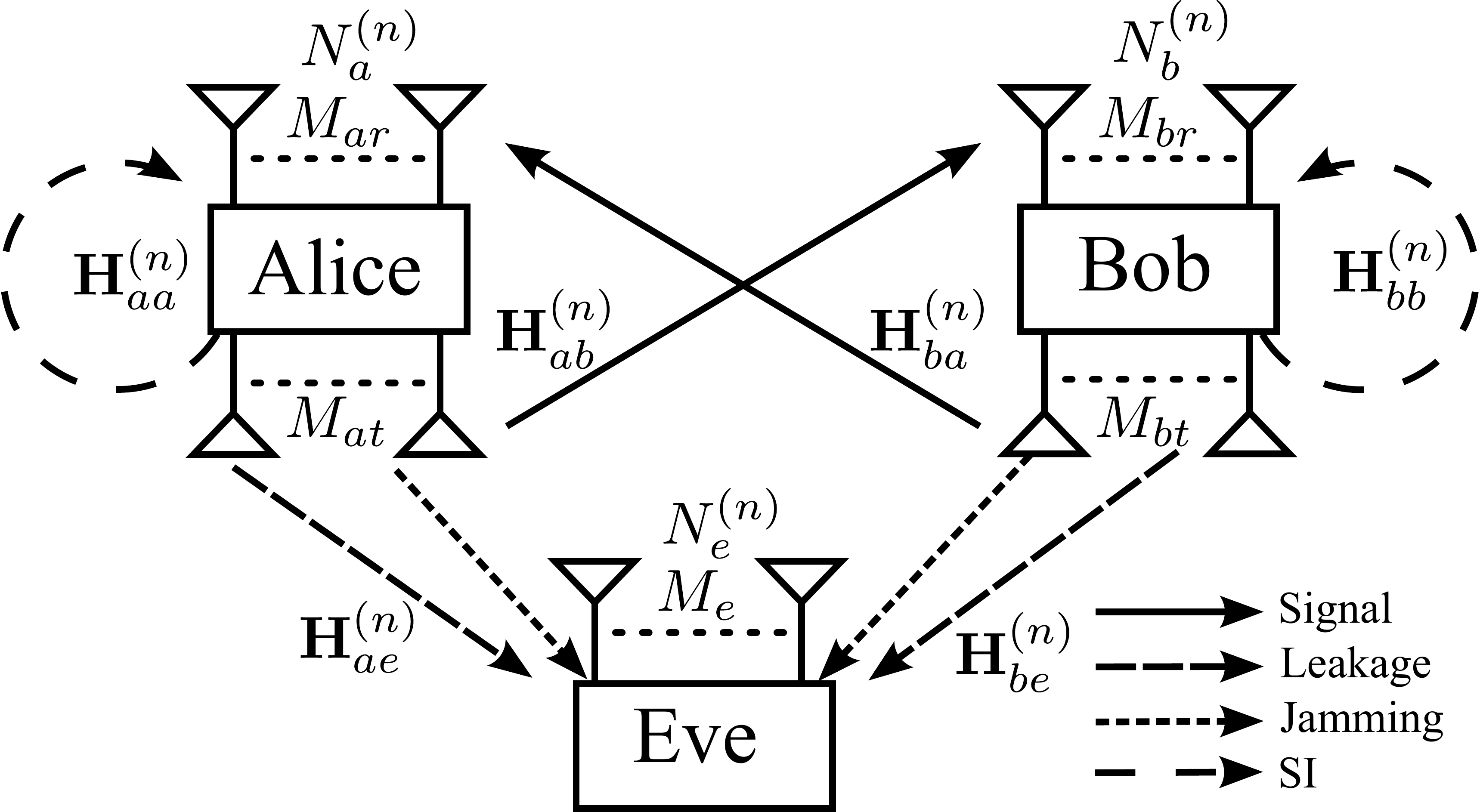}
				        %\fbox{model_rect6.pdf}
    \caption{{{The studied bi-directional multi-carrier wiretap channel, where both Alice and Bob are FD nodes and able to jamming Eve.}}} \label{fig:bothFDmodel}
    \end{center} \vspace{-0mm} 
\end{figure}

The proposed solution in Section~\ref{sec:SSRM} acts as an efficient optimization framework for the underlying multi-carrier system defined in Section~\ref{sec:model}, following iterative updates using closed-form expressions (\ref{solution_auxilliary_T}) and (\ref{solution_auxilliary_Q}), as well as the MAX-DET algorithm. In this part we extend the same framework to support a more general setup where both Alice and Bob are capable of FD operation. This includes a bi-directional data communication between Alice and Bob, as well as the jamming capability at both nodes. The bi-directional system can improve the performance since it leads to a higher spectral efficiency \cite{6177689}, and the jamming signal from each node (Alice or Bob) can degrade Eve's reception quality at both directions. 

In order to update our notation, we denote the number of transmit (receive) antennas at Alice, Bob to Alice and Alice's SI channel in $n$-th sub-carrier as $M_{at}, M_{ar}, \ma{H}_{ba}^{(n)} $ and $ \ma{H}_{aa}^{(n)}$, respectively, see Fig. \ref{fig:bothFDmodel}. Moreover, we denote $\ma{s}^{(n)}_X, \ma{x}^{(n)}_X, \ma{w}^{(n)}_X, \ma{V}^{(n)}_X, \ma{X}^{(n)}_X, \ma{W}^{(n)}_X$ as the same signal types as in Section~\ref{sec:model}, but specific for the node $X$, such that $X\in\{a,b\}$. Thus the transmit signal from each node is updated as $\ma{u}^{(n)}_X = \ma{x}^{(n)}_X + \ma{w}^{(n)}_X$ with $\ma{x}^{(n)}_X = \ma{V}^{(n)}_X \ma{s}^{(n)}_X$, containing both the information and jamming signal from each node. Then the received interference-plus-noise covariance matrix in $n$-th subcarrier at Alice, Bob and Eve are updated as
\begin{align}
    \tilde{\ma{\Sigma}}_{\mathcal{X}}^{(n)} & =  N_{\mathcal{X}}^{(n)} \ma{I}_{M_{\mathcal{X}r}} +  \ma{H}_{\alpha_\mathcal{X}}^{(n)}\ma{W}_\mathcal{X}^{(n)}{\ma{H}_{\alpha_\mathcal{X}}^{(n)}}^H \nonumber \\ & +\text{trace}\left(\ma{X}_\mathcal{X}^{(n)}+\ma{W}_\mathcal{X}^{(n)}\right) \ma{D}_{\mathcal{X}\mathcal{X}}^{(n)} {\ma{D}_{\mathcal{X}\mathcal{X}}^{(n)}}^{H}  \nonumber \\
    & + \ma{H}_{\mathcal{X}\mathcal{X}}^{(n)}  \left( \kappa_\mathcal{X}^{(n)} \sum_{n \in \mathcal{N}}   \text{diag} \left(\ma{X}_\mathcal{X}^{(n)}+\ma{W}_\mathcal{X}^{(n)} \right)  \right)  {\ma{H}_{\mathcal{X}\mathcal{X}}^{(n)}}^{H} \nonumber \\ 
    & +  \beta_\mathcal{X}^{(n)} \text{diag} \left(  \sum_{n \in \mathcal{N}}  \ma{H}_{\mathcal{X}\mathcal{X}}^{(n)} \left(\ma{X}_\mathcal{X}^{(n)}+\ma{W}_\mathcal{X}^{(n)}\right) {\ma{H}_{\mathcal{X}\mathcal{X}}^{(n)}}^H \right), \label{eq_model_aggregate_interference_covariance_BI_b}\\
    \tilde{\ma{\Sigma}}_{e}^{(n)} & = N_e^{(n)} \ma{I}_{M_{e}} + \ma{H}_{ae}^{(n)}\ma{W}_a^{(n)}{\ma{H}_{ae}^{(n)}}^H +  \ma{H}_{be}^{(n)}\ma{W}_b^{(n)}{\ma{H}_{be}^{(n)}}^H, \label{eq_model_aggregate_interference_covariance_BI_e}
\end{align}
where $\mathcal{X} \in \{a,b\}$, $\alpha_a = ab$, $\alpha_b = ba$. Moreover $\kappa_{a}^{(n)}(\kappa_b^{(n)}) \in \mathbb{R}^{+}$ and $\beta_a^{(n)}(\beta_b^{(n)}) \in \mathbb{R}^{+}$ are the transmit and receive distortion coefficient at Alice(Bob) in the $n$-th subcarrier, $N_a^{(n)}$ represents the thermal noise variance at Alice in the $n$-th subcarrier.  Please note that in (\ref{eq_model_aggregate_interference_covariance_BI_e}) we consider a worst case scenario where the information signal from Alice and Bob as interference can be decoded by Eve\cite{6781609}. The defined system secrecy rate is hence written as 
\begin{align} 
    \tilde{\mathcal{I}}_{\text{sec}}^{(n)}  = \left\{ \tilde{\mathcal{I}}_{ab}^{(n)} - \tilde{\mathcal{I}}_{ae}^{(n)}  \right\}^{+} + \left\{ \tilde{\mathcal{I}}_{ba}^{(n)} - \tilde{\mathcal{I}}_{be}^{(n)}  \right\}^{+},
\end{align} 
where $\tilde{\mathcal{I}}_{ab}^{(n)}-\tilde{\mathcal{I}}_{ae}^{(n)}$ is obtained by applying (\ref{eq_model_aggregate_interference_covariance_BI_b}) and (\ref{eq_model_aggregate_interference_covariance_BI_e}) into (\ref{model:I_sec}), and
\begin{align}
    \tilde{\mathcal{I}}_{ba}^{(n)}-\tilde{\mathcal{I}}_{be}^{(n)} = \;&\text{log}_2 \left| \ma{I}_d +  {\ma{H}_{ba}^{(n)}}\ma{X}_{b}^{(n)}{\ma{H}_{ba}^{(n)}}^H  \left( \tilde{\ma{\Sigma}}_a^{(n)}\right)^{-1}\right| \nonumber \\  -\; &\text{log}_2 \left| \ma{I}_d + {\ma{H}_{be}^{(n)}} \ma{X}_{b}^{(n)} {\ma{H}_{be}^{(n)}}^H  \left( \tilde{\ma{\Sigma}}_e^{(n)}\right)^{-1} \right| ,
\end{align}
and the sum secrecy rate is defined the same as before.

\subsection{Bidirectional sum secrecy rate maximization}
Similarly to Section~\ref{sec:SSRM}, to maximize the sum secrecy rate in bidirectional communication system the optimization problem is written as 
\begin{subequations}  \label{BiOpt}
\begin{align}
    \underset{\tilde{\mathbb{X}}, \tilde{\mathbb{W}}}{\text{max}} \;\;  &\;\;  \sum_{n \in \mathcal{N}} \tilde{\mathcal{I}}_{\text{sec}}^{(n)}  \\
    {\text{s.t.}} \;\;    & \quad \text{tr} \left( \tilde{\ma{\Theta}}_a  \right) \leq P_{A,\text{max}}, \\ & \quad \text{tr} \left( \tilde{\ma{\Theta}}_b \right) \leq P_{B,\text{max}},
\end{align}
\end{subequations}
where $\tilde{\mathbb{X}}$ ($\tilde{\mathbb{W}}$) represents the set of $\ma{X}_X^{(n)} \succeq 0$ ($ \ma{W}_X^{(n)} \succeq 0$), $\forall n \in \mathcal{N}$, $\tilde{\ma{\Theta}}_X := \sum_{n\in \mathcal{N}}\left( \ma{X}_X^{(n)} + \ma{W}_X^{(n)}\right)$, $X \in \{a,b\}$ and $P_{A,\text{max}}, P_{B,\text{max}} \in \real^{+}$ represent the maximum transmit power of Alice and Bob.  

It is observed that the optimization problem in (\ref{BiOpt}) holds a similar mathematical structure in relation to the transmit covariance matrices, i.e., $\ma{X}_X^{(n)},\ma{W}_X^{(n)}$, $X \in \{a,b\}$, $\forall n \in \mathcal{N}$ as addressed for (\ref{OptProblem2}). Hence a similar procedure as in the Algorithm~\ref{mainalgorithm} following the result of the Lemma \ref{lemma_positiveOperation} and \ref{lemma_logdetE} is employed to obtain an optimal solution. The computational complexity of the steps of the determinant maximization is obtained similar to (\ref{computationalComplexity}), where $n=2N(M^2_{at} + M^2_{bt})$, $n_Y = N(M_e + M_{ar} + M_{br})$ and $n_F = 2N(M_{at} + M_{bt}) + 2$.

%\section{Optimal Energy Efficient Design}
%\input{main_EE_Opt_Quality_NotFixed}
%\section{Optimal Power Allocation with Sorting Mechanism}
%\input{main_SimplifiedOptimal}
%\section{Complexity Analysis and Comparison}
%\input{main_ComplexityAnalyze}
%\input{main_ImperfectCSIAll}
%\section {Optimal design framework with minimum power consumption}
%\input{main_ImperfectCSI_EE}
\section {Simulation Results}\label{sec:simulations}
 % \begin{table*}[!t] \label{}
%   \renewcommand{\arraystretch}{1.2}
%   \caption{Default simulated setup}\label{tab:Scenarios}
%   \centering
%   \begin{tabular}[t]{|Parameter||c|c|c|c|c|}
%     %\firsthline
% 		    \hline
%       Scenario &Scenario~$1$ & Scenario~$2$ & Scenario~$3$ & Scenario~$4$ & Scenario~$5$  \\
%     \hline
%     & FD setup, proposed design& FD setup, & $|\mathbb{K}| = 16, |\mathbb{L}_k| = 16 $ & $|\mathbb{K}| = 64, |\mathbb{L}_k| = 4 $& $|\mathbb{K}| = 256, |\mathbb{L}_k| =1  $\\
%     %\lasthline
% 		    \hline
%   \end{tabular}
% \end{table*}

\begin{figure}[!t] 
    \begin{center}
        \includegraphics[angle=0,width=0.89\columnwidth]{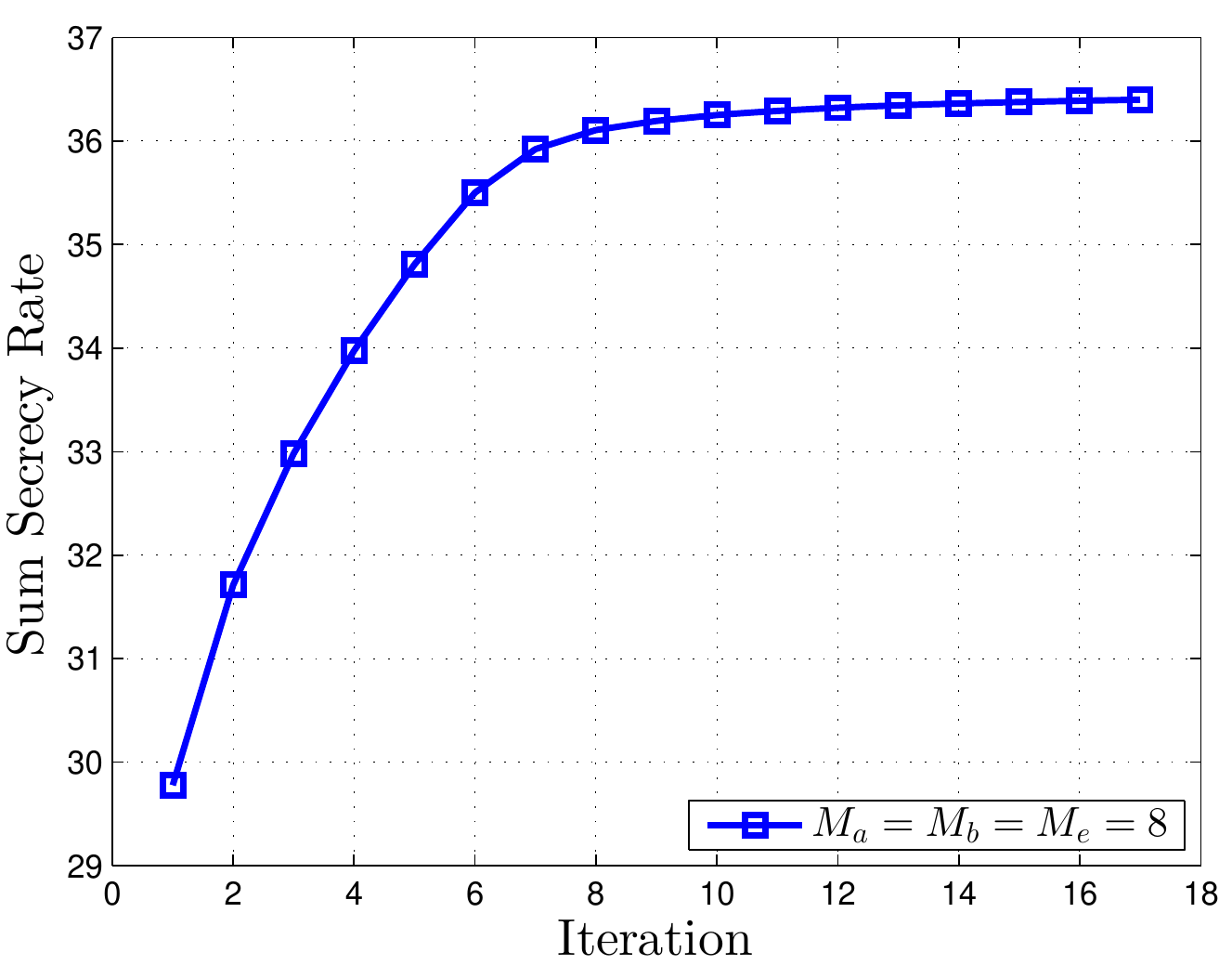}
    \caption{{{Average convergence behavior of the proposed iterative method. The proposed method converges to a stable point within 10-20 optimization iterations. }}} \label{fig:convergence}
    \end{center}
    % \vspace{-2mm} 
\end{figure}

\begin{figure}[!t] 
    \begin{center}
        \includegraphics[angle=0,width=0.89\columnwidth]{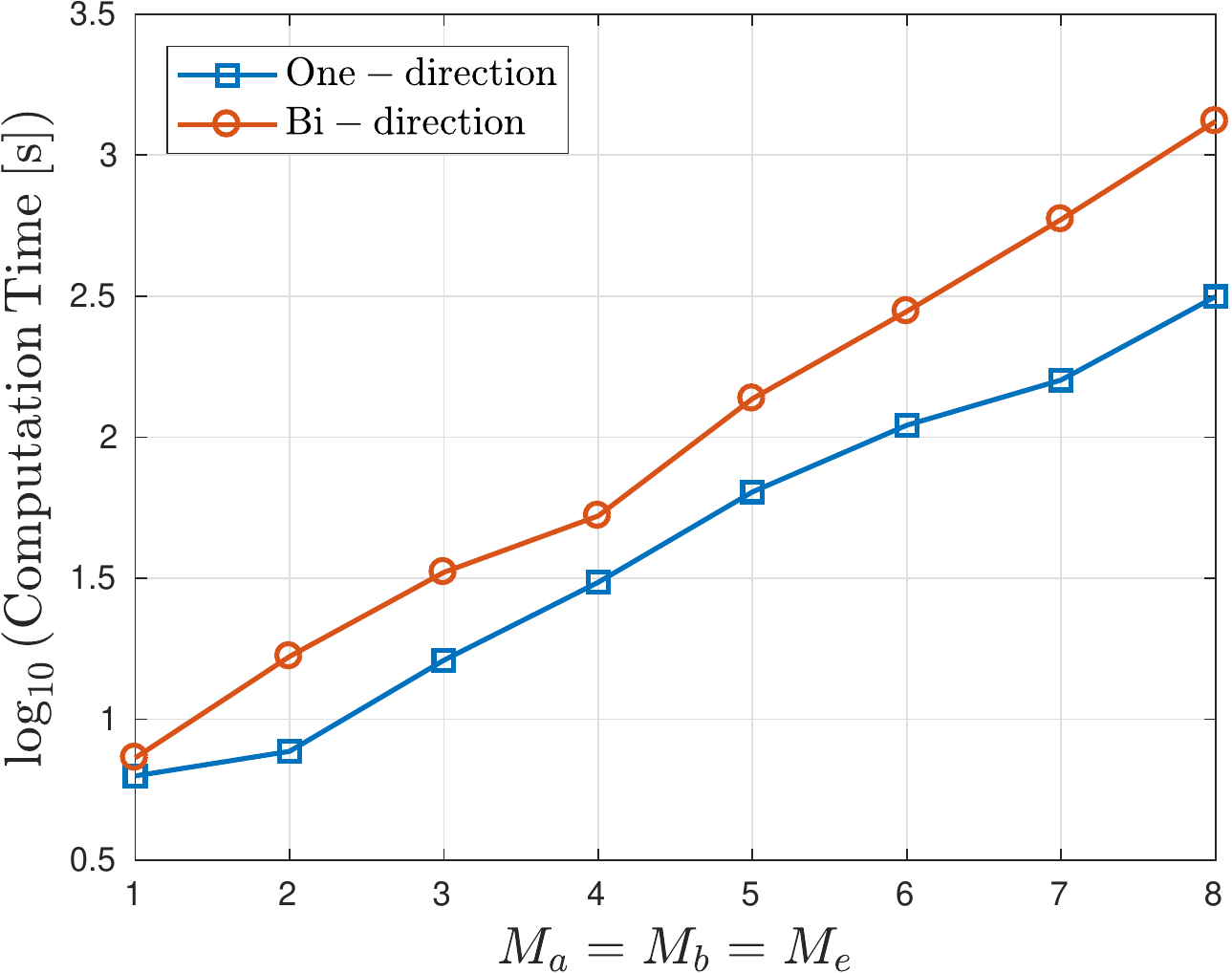}
    \caption{{{Average computation time of the proposed iterative method. }}} \label{fig:cpu}
    \end{center}
    % \vspace{-2mm} 
\end{figure}

\begin{figure}[!t] 
    \begin{center}
        \includegraphics[angle=0,width=0.89\columnwidth]{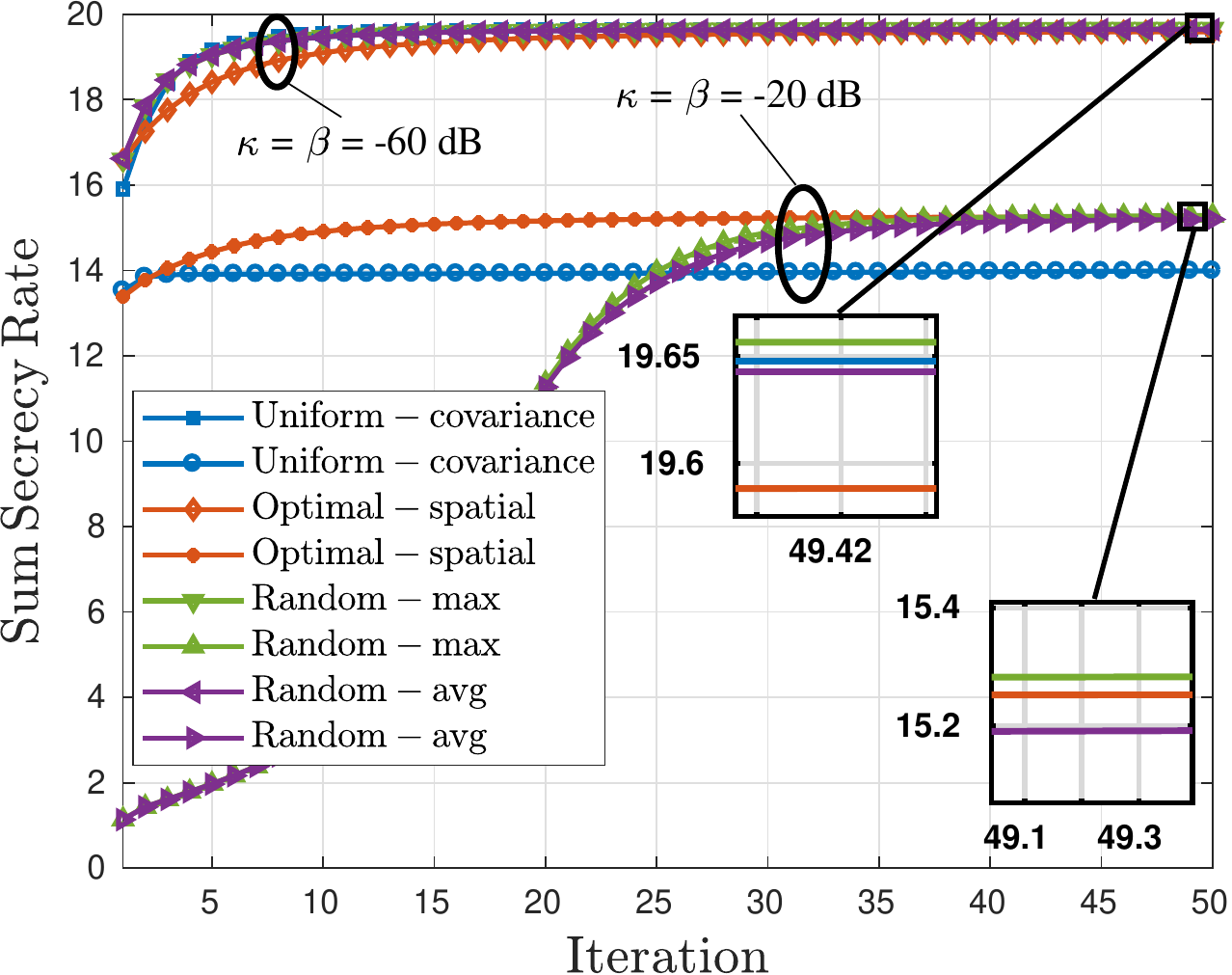}
    \caption{{{Impact of initialization on the proposed iterative method. }}} \label{fig:ini}
    \end{center}
    % \vspace{-2mm} 
\end{figure}

\begin{figure}[!t] 
    \begin{center}
        \includegraphics[angle=0,width=0.89\columnwidth]{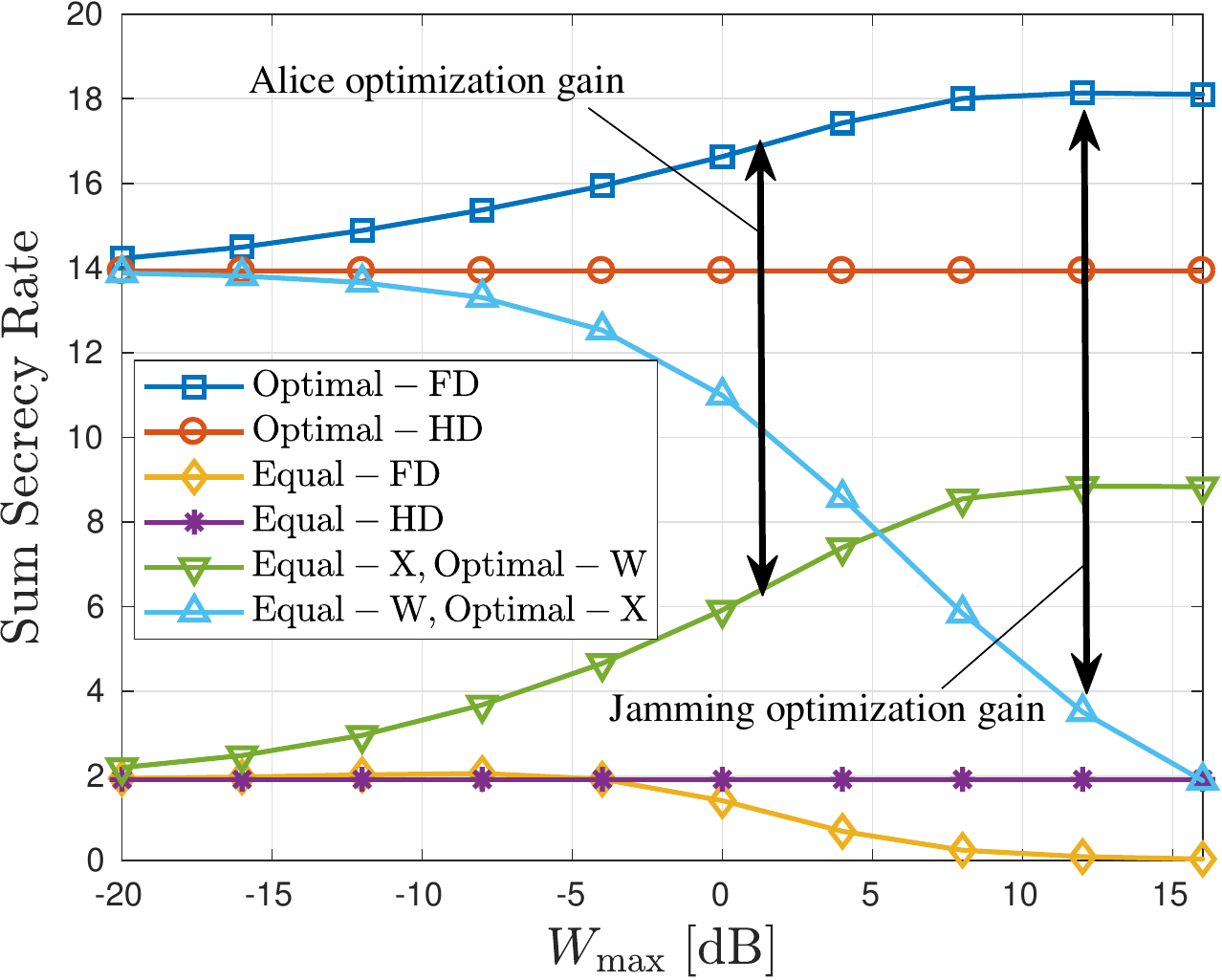}
    \caption{{{Obtained sum secrecy rate vs. maximum jamming power from Bob.}}} \label{fig:wmax}
    \end{center} 
    % \vspace{-2mm} 
\end{figure}

\begin{figure}[!t] 
    \begin{center}
        \includegraphics[angle=0,width=0.89\columnwidth]{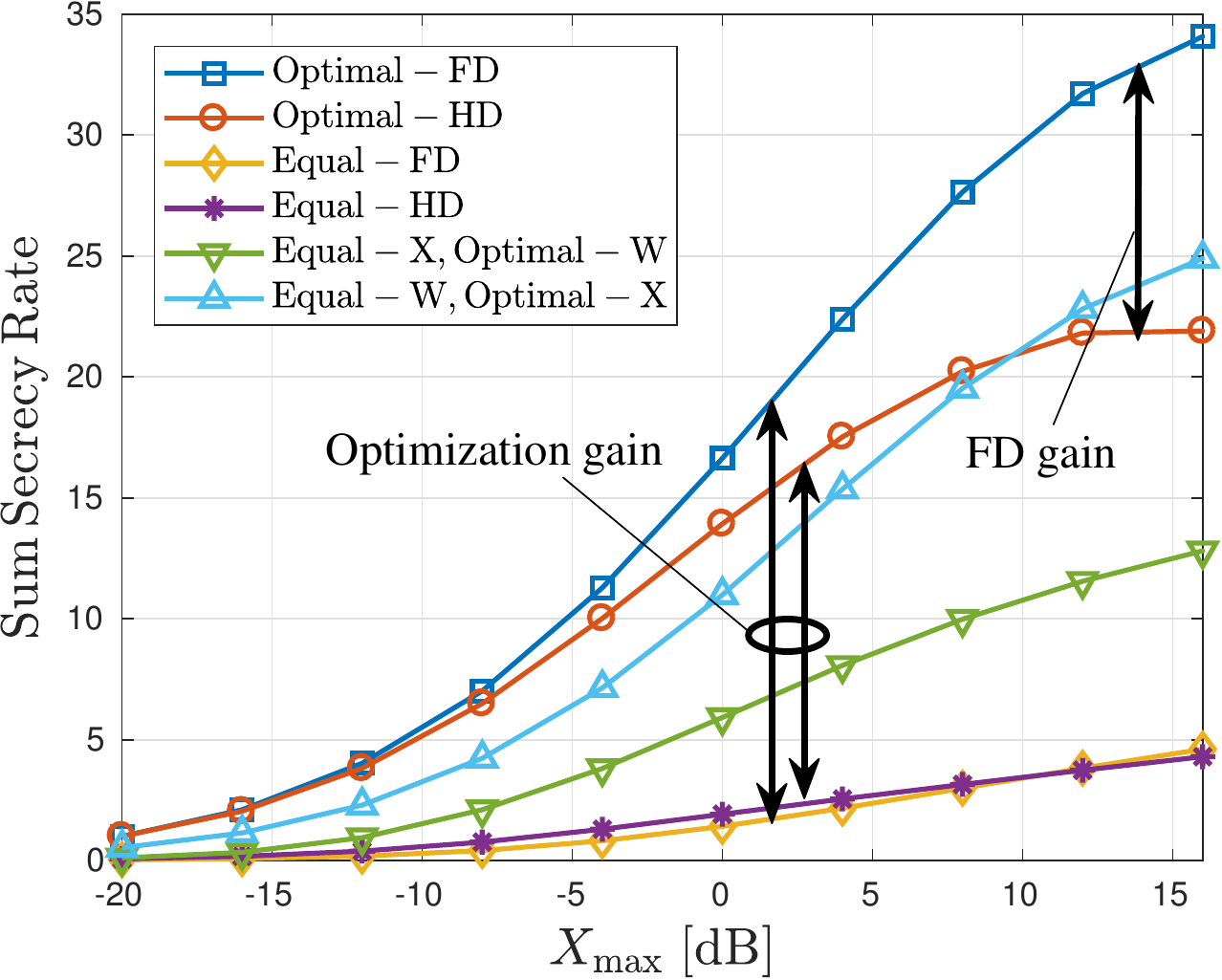}
    \caption{{{Obtained sum secrecy rate vs. maximum transmit power from Alice. }}} \label{fig:xmax}
    \end{center} 
    % \vspace{-2mm} 
\end{figure}  

\begin{figure}[!t] 
    \begin{center}
        \includegraphics[angle=0,width=0.89\columnwidth]{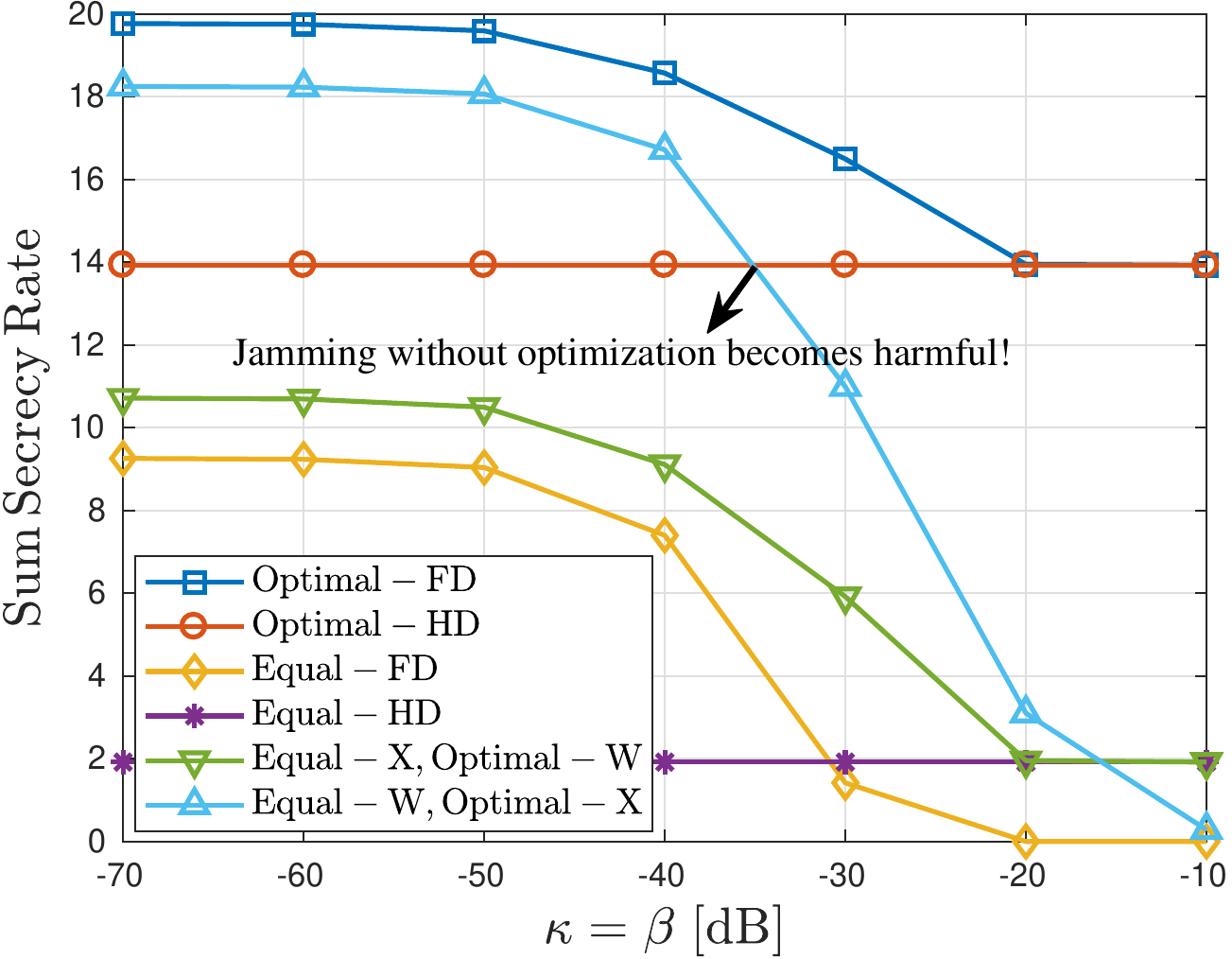}
    \caption{{{Obtained sum secrecy rate vs. transceiver dynamic range $\kappa = \beta$. }}} \label{fig:kappa}
    \end{center} 
    % \vspace{-2mm} 
\end{figure} 

\begin{figure}[!t] 
    \begin{center}
        \includegraphics[angle=0,width=0.89\columnwidth]{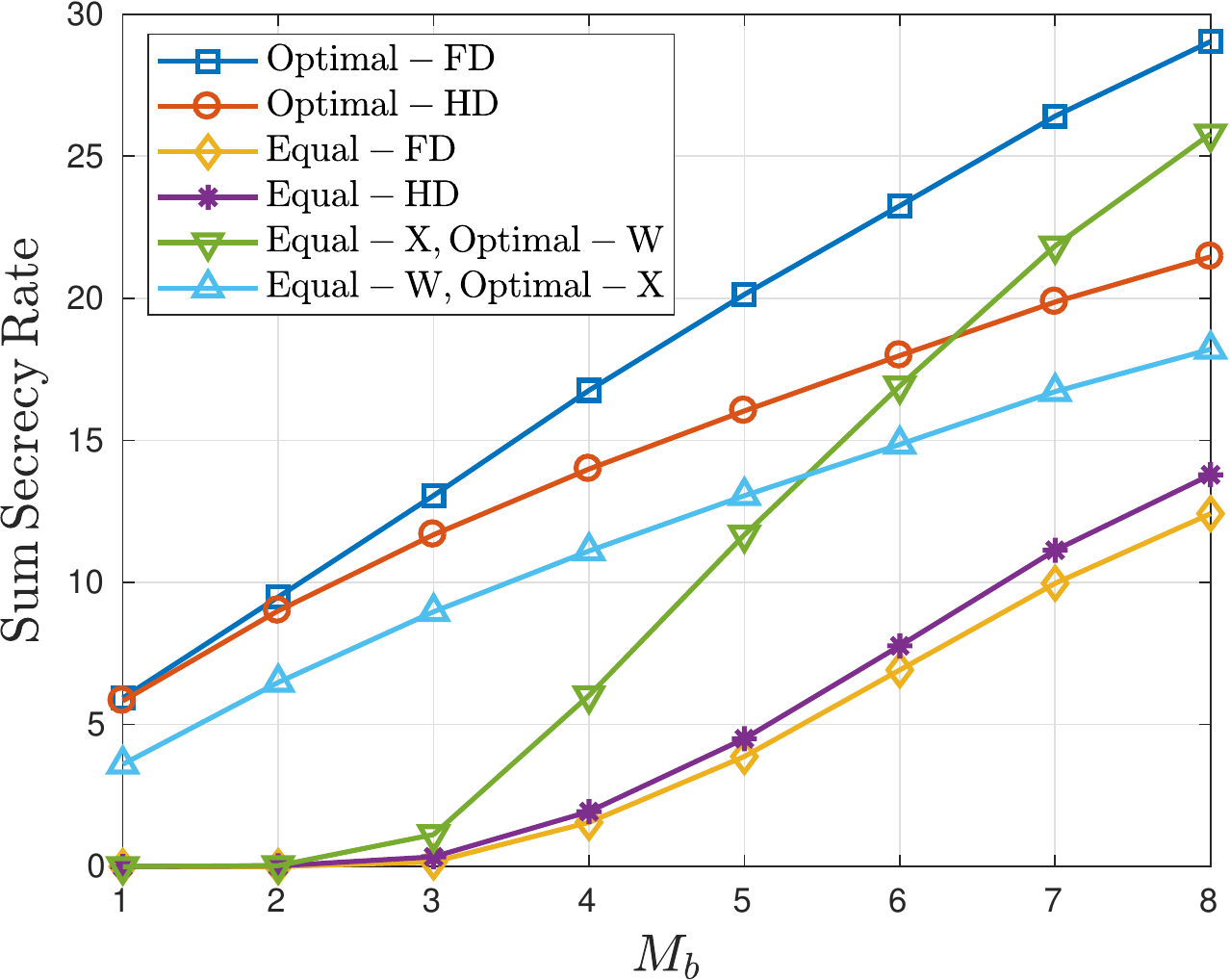}
    \caption{{{Obtained sum secrecy rate vs. number of the transmit/receive antennas at Bob $M_{b}=M_{bt}= M_{br}$. }}} \label{fig:Antenna_Bob2}
    \end{center} 
    % \vspace{-2mm} 
\end{figure} 

\begin{figure}[!t] 
    \begin{center}
        \includegraphics[angle=0,width=0.89\columnwidth]{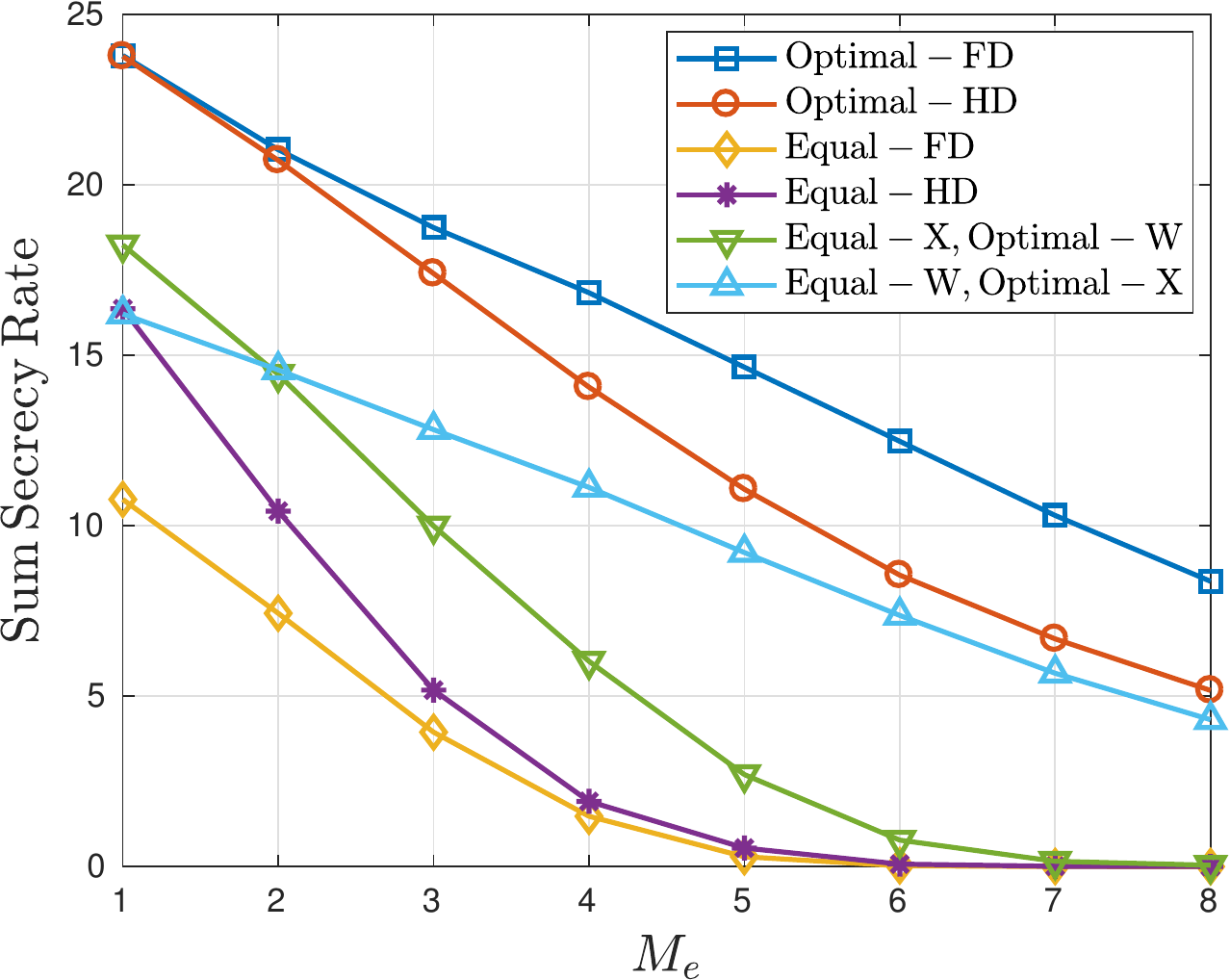}
    \caption{Obtained sum secrecy rate vs. number of the receive antennas at Eve.} \label{fig:Antenna_Eve2}
    \end{center} 
    % \vspace{-2mm} 
\end{figure} 

\begin{figure}[!t] 
    \begin{center}
        \includegraphics[angle=0,width=0.89\columnwidth]{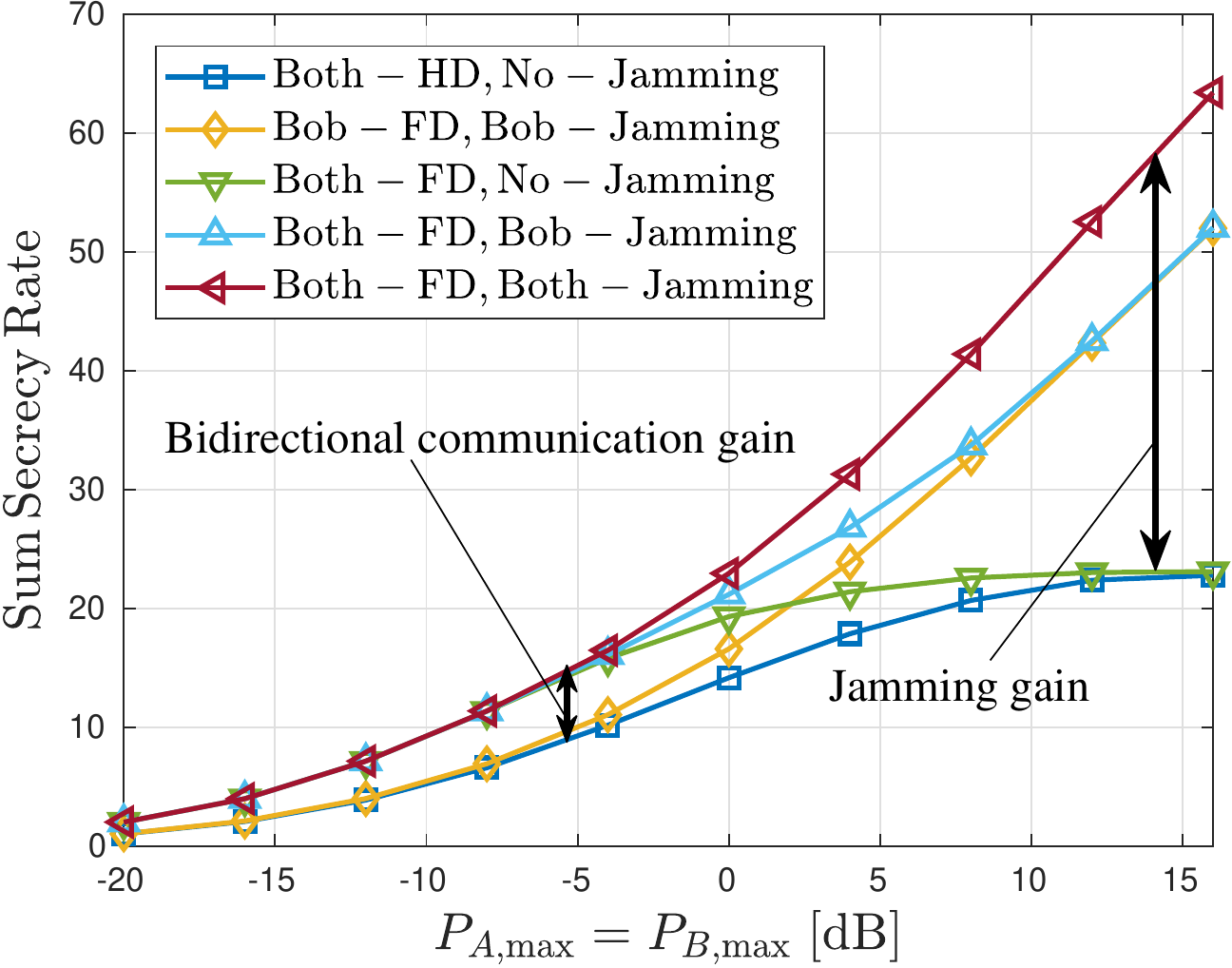}
    \caption{{{Performance of bidirectional secure communication related to maximum transmit power per node. }}} \label{fig:bothFD_power}
    \end{center} 
    % \vspace{-2mm} 
\end{figure}

\begin{figure}[!t] 
    \begin{center}
        \includegraphics[angle=0,width=0.89\columnwidth]{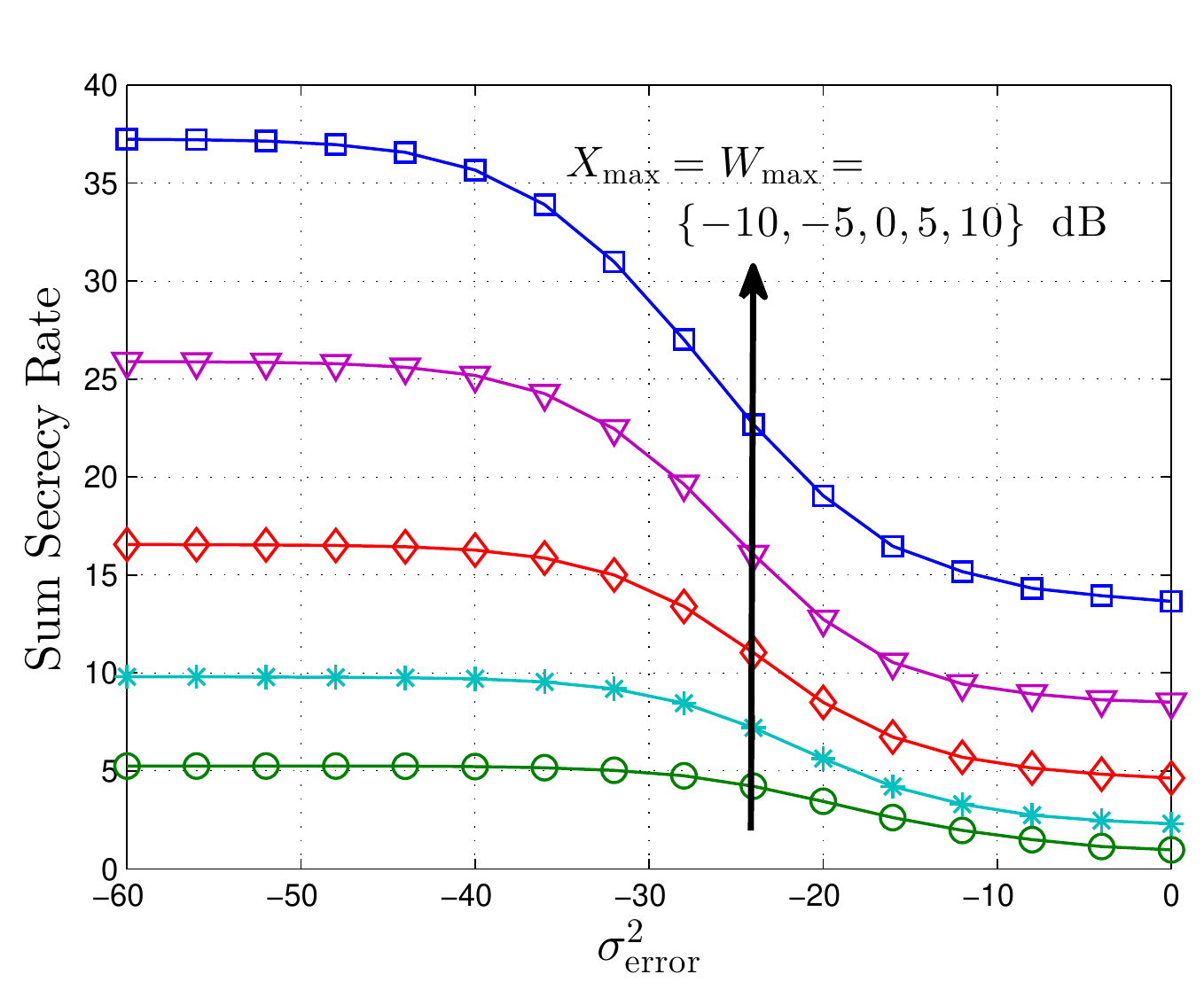}
    \caption{{{System sensitivity of proposed design with respect to the CSI error. }}} \label{fig:imperfectCSI}
    \end{center} 
    % \vspace{-2mm} 
\end{figure}

In this section we numerically evaluate the resulting sum secrecy rate of the defined system, comparing different system aspects and design strategies. In this respect, we assume that the channels $\ma{H}_{X}^{(n)}$ are following an uncorrelated Rayleigh distribution, with variance $\eta_X$ for each element, where $X\in\{ab, ba, ae, be\}$. Furthermore, $\ma{H}_{bb}^{(n)} \sim \mathcal{CN}\left( \sqrt{\frac{K_R}{1+K_R}} \ma{H}_0 , \frac{1}{1+K_R} \ma{I}_{M_{br}} \otimes \ma{I}_{M_{bt}}  \right)$, following \cite{6353396}, where $\ma{H}_0$ is a matrix with all elements equal to $1$ and $K_R$ is the Rician coefficient. The statistics of the self-interference channel on Alice, i.e., $\ma{H}^{(n)}_{aa}$, is defined similarly. The resulting system performance is then averaged over 200 channel realizations. Unless otherwise is stated we use the following values to define our default setup: $M_a = M_{at}= M_{ar} = 4$, $M_{b}=M_{bt}= M_{br} = 4$, $M_e =4$, $|\mathcal{N}|=4$, $K_R =10$, $X_{\text{max}}= W_{\text{max}} = P_{A,\text{max}} = P_{B,\text{max}} = 0\text{dB}$, $\kappa=\kappa^{(n)} = -30\text{dB}$, $\beta = \beta^{(n)} = -30\text{dB}$, $N_a= N_a^{(n)}= -30\text{dB}$, $N_b= N_b^{(n)}= -30\text{dB}$, $N_e= N_e^{(n)}= -30\text{dB}$, $\eta_{ab}=\eta_{ba}=\eta_{ae}=\eta_{be}=-20\text{dB}$, $\hat{\ma{D}}_{\mathcal{X}\mathcal{X}}=\hat{\ma{D}}^{(n)}_{\mathcal{X}\mathcal{X}} = \ma{D}^{(n)}_{\mathcal{X}\mathcal{X}}{\ma{D}^{(n)}_{\mathcal{X}\mathcal{X}}}^H = \ma{0}_{M_{\mathcal{X}r}\times M_{\mathcal{X}r}}, \mathcal{X} \in \{a,b\}$. \par

\subsection{Algorithm analysis}\label{algorithmAnalysis}
In this part the average convergence behavior and the  computational complexity of the proposed algorithm are studied. Moreover the impact of the choice of initialization points are evaluated.

In Fig.~\ref{fig:convergence} the average convergence behavior of the proposed iterative method is depicted. As it is observed, the convergence is obtained within 10-20 optimization iterations, which indicates the efficiency of the proposed iterative algorithm in terms of the required computational effort.

In Fig.~\ref{fig:cpu} the average required computation time for single directional system (`One-direction') and bidirectional system (`Bi-direction') related to the equipped transmit/receive antenna number of all nodes is depicted\footnote{The reported computation time is obtained using an Intel Core i7 4790S processor with the clock rate of 3.2 GHz and 16 GB of random access memory (RAM). The software platform is CVX \cite{cvx,gb08} with MATLAB 2014a on a 64-bit operating system.}. It is observed that a higher antenna array size results in a higher required computational complexity, associated with slower convergence and larger problem dimensions. Moreover, due to the additional problem complexity, the bidirectional communication system with joint FD-enabled jamming results in a higher computation time.

In Fig.~\ref{fig:ini} the impact of the initialization method is depicted. `Uniform-covariance', `Optimal-spatial', `Random-max', `Random-avg' represent the uniform covariance with equal power allocation initialization, optimal spatial beam initialization, maximal and average value of random initialization, respectively. It is observed that under high SIC level, i.e., low $\kappa,\beta$, the uniform covariance with equal power allocation initialization method reaches close to the benchmark performance\footnote{The benchmark performance is obtained by repeating the algorithm with 20 random initializations and choosing the highest obtained sum secrecy rate.}. Nevertheless, the optimal spatial beam initialization method results in a worse sum secrecy rate, however, within 0.36\% of the relative difference. Conversely, under low SIC level, i.e., high $\kappa, \beta$, the algorithms associated with uniform covariance with equal power allocation initialization converge to a local optimal point with a very small iteration number, which results in a relatively large difference margin (6-7\%) compared to the benchmark. Nevertheless, the optimal spatial beam initialization method has a close performance compared to the benchmark in this case.         

\subsection{Performance comparison}
In this part the performance of the proposed FD-enabled system and bidirectional communication system are evaluated under different system conditions. The performance between the FD-enabled setup and HD setup are also compared.  
\subsubsection{FD Bob jamming}
In Figs.~\ref{fig:wmax}-\ref{fig:Antenna_Eve2} the resulting sum secrecy rate is depicted considering different design strategies. In this respect, `Optimal-FD' represents the proposed design in Section~\ref{sec:SSRM}, supporting a FD jamming receiver. `Optimal-HD' represents a similar setup with a HD Bob. `Equal-FD' (`Equal-HD') is the setup with no optimization, i.e., a uniform power and beam allocation in all subcarriers for a system with an FD (HD) Bob. Furthermore, `Equal-X, Optimal-W', represents the case with equal power and beam allocation over all subcarriers for Alice together with an optimal design of the jammer. Conversely, `Equal-W, Optimal-X' represents the case with equal power and beam allocation for Bob together with an optimal design for Alice. 

In Fig.~\ref{fig:wmax} the impact of the maximum jamming power from Bob is depicted. A considerable gain is observed in this respect for a system with optimized jamming. Nevertheless, such gain is limited as $W_{\text{max}}$ increases. This stems from the fact that while jamming results in the degradation of Alice-Eve channel, the secrecy rate is bounded due to the limited Alice-Bob channel capacity. Moreover, it is observed that such jamming gain is only obtained by applying an optimally designed jamming transmit strategy. This emphasizes the impact of residual SI on the Alice-Bob communication, which should be controlled via jamming optimization. As a result, for a system with no jamming optimization, a high $W_{\text{max}}$ results in a significantly lower secrecy rate due to the impact of residual SI.

In Fig.~\ref{fig:xmax} the impact of the maximum transmit power from Alice on the obtained sum secrecy rate is depicted. Is is observed that as $X_{\text{max}}$ increases, the system obtains a higher sum secrecy rate. The performance gain, due to the optimization of transmit strategies, and due to the jamming capability at Bob is observed.

In Fig.~\ref{fig:kappa} the impact of the transceiver dynamic range is depicted. It is observed that as $\kappa$($=\beta$) increases, the jamming gain decreases due to the impact of residual SI. In this respect, for a transceiver with large values of $\kappa=\beta$ the jamming is turned-off for an optimally-designed system. On the other hand, a high $\kappa$ results in a sever degradation of the system performance due to the impact of residual SI, if the jamming strategy is not optimally controlled.  

In Fig.~\ref{fig:Antenna_Bob2} and Fig.~\ref{fig:Antenna_Eve2} the obtained sum secrecy rate is evaluated for different number of antennas at Bob and Eve. As expected, a more powerful Eve, i.e., higher $M_e$, results in a reduced system secrecy. In this respect, the gain of FD jamming becomes clear in combating the increasing quality of Alice-Eve channel. On the other hand, it is observed that the resulting secrecy improves as $M_{b}$ increases. In particular, the gain of FD jamming becomes significant as $M_{b}$ increases, as the jamming beam can be directed to Eve more efficiently. Furthermore, for smaller number of antennas at Bob, the optimization at Alice gains significance. This is perceivable, as a smaller $M_b$ results in a smaller design freedom at Bob, and also a weaker Alice-Bob channel.   

\subsubsection{Secure bidirectional communication}
In Fig.~\ref{fig:bothFD_power} a bidirectional secure communication system is studied. Three scenarios are considered regarding the jamming capability. Specifically, `Both-FD, No-Jamming', `Both-FD, Bob-Jamming' and `Both-FD, Both-Jamming' represent the bidirectional communication system with FD operation at Alice and Bob which is without jamming capability, with jamming capability only at Bob and with jamming capability at both Alice and Bob, respectively. Moreover two scenarios of the single direction communication system are also evaluated. Specifically, `Both-HD, No-Jamming' represents the system with HD operation at Alice and Bob without jamming capability and `Bob-FD, Bob-Jamming' represents the system with a HD Alice and a FD Bob as a jamming receiver.

It is observed that the bidirectional communication system leads a considerable enhancement of sum secrecy rate in a wide range of $P_{A,\text{max}},P_{B,\text{max}}$. Moreover, the jamming impact is more significant with large available power. From the results of `Both-FD, Both-Jamming' and `Both-FD, Bob-Jamming' it is also observed that the bidirectional jamming leads a higher sum secrecy rate in the studied bidirectional system, due to the reused jamming power for both communication directions. 

\subsection{Sensitivity to CSI error}
In Fig.~\ref{fig:imperfectCSI} the sensitivity of the proposed design, in terms of the resulting sum system secrecy rate, is observed with respect to the CSI error. The CSI error is modeled as $\tilde{H}^{(n)}_X = {H}^{(n)}_X  + {E}^{(n)}_X$, $X\in\{ab, ae, be\}$, where ${E}^{(n)}_X$ is modeled as a matrix with Gaussian i.i.d. elements with variance $\sigma_{\text{error}}^2$. It is observed that as the CSI accuracy decreases, the performance of the proposed design decreases. Nevertheless, the performance converges to its minimum level as $\sigma_{\text{error}}^2$ increases, as a high $\sigma_{\text{error}}^2$, is equivalent of having \textit{no knowledge} of the communication channels. Moreover, a system with a higher power level is more sensitive to CSI accuracy, compared to a system with a smaller $W_{\text{max}} = X_{\text{max}}$. This stems in the fact that as the transmit power decreases, the significance of the user noise increases and acts as the dominant factor in the system. In this respect, the impact of the CSI error becomes less significant, as noise acts as the dominant source of signal uncertainty.

\section {Conclusion} \label{sec:conclusion}
In this paper we have studied a joint power and beam optimization problem for a multi-carrier and MIMO wiretap channel in both single directional and bidirectional communication systems, where FD transceivers are capable of jamming. It is observed that for a system with an adequately high SI cancellation capability, an optimal jamming strategy results in a significant improvement of the sum secrecy capacity. In particular, in a frequency selective setup, the frequency diversity in different subcarriers can be opportunistically used, both regarding the jamming and the desired information link, to jointly enhance the resulting secrecy capacity. Nevertheless, it is observed that a jamming strategy with no power and/or beam optimization may lead to a reduced system secrecy, particularly as the SI cancellation capability decreases. Moreover, a promising sum secrecy gain is obtained from an FD bidirectional communication, where jamming power can be reused to improve security for both directions.  

{
%\ifthenelse{\boolean{publ}}{\footnotesize}{\small}
 \bibliographystyle{IEEEtran}  % Style BST file
\bibliography{collection}

% Generated by IEEEtran.bst, version: 1.14 (2015/08/26)
\begin{thebibliography}{10}
\providecommand{\url}[1]{#1}
\csname url@samestyle\endcsname
\providecommand{\newblock}{\relax}
\providecommand{\bibinfo}[2]{#2}
\providecommand{\BIBentrySTDinterwordspacing}{\spaceskip=0pt\relax}
\providecommand{\BIBentryALTinterwordstretchfactor}{4}
\providecommand{\BIBentryALTinterwordspacing}{\spaceskip=\fontdimen2\font plus
\BIBentryALTinterwordstretchfactor\fontdimen3\font minus
  \fontdimen4\font\relax}
\providecommand{\BIBforeignlanguage}[2]{{%
\expandafter\ifx\csname l@#1\endcsname\relax
\typeout{** WARNING: IEEEtran.bst: No hyphenation pattern has been}%
\typeout{** loaded for the language `#1'. Using the pattern for}%
\typeout{** the default language instead.}%
\else
\language=\csname l@#1\endcsname
\fi
#2}}
\providecommand{\BIBdecl}{\relax}
\BIBdecl

\bibitem{7962841}
O.~Taghizadeh, T.~Yang, and R.~Mathar, ``Joint power and beam optimization in a
  multi-carrier mimo wiretap channel with full-duplex jammer,'' in \emph{2017
  IEEE International Conference on Communications Workshops (ICC Workshops)},
  May 2017, pp. 1316--1322.

\bibitem{6736751}
S.~Hong, J.~Brand, J.~I. Choi, M.~Jain, J.~Mehlman, S.~Katti, and P.~Levis,
  ``Applications of self-interference cancellation in 5g and beyond,''
  \emph{IEEE Communications Magazine}, vol.~52, no.~2, pp. 114--121, February
  2014.

\bibitem{179789}
D.~Bharadia and S.~Katti, ``Full duplex {MIMO} radios,'' in \emph{11th {USENIX}
  Symposium on Networked Systems Design and Implementation ({NSDI} 14)}.\hskip
  1em plus 0.5em minus 0.4em\relax Seattle, WA: {USENIX} Association, 2014, pp.
  359--372.

\bibitem{6319352}
Y.~Hua, P.~Liang, Y.~Ma, A.~C. Cirik, and Q.~Gao, ``A method for broadband
  full-duplex mimo radio,'' \emph{IEEE Signal Processing Letters}, vol.~19,
  no.~12, pp. 793--796, Dec 2012.

\bibitem{Bharadia:2013}
D.~Bharadia, E.~McMilin, and S.~Katti, ``Full duplex radios,'' in
  \emph{Proceedings of the ACM SIGCOMM 2013 Conference on SIGCOMM}, ser.
  SIGCOMM '13.\hskip 1em plus 0.5em minus 0.4em\relax New York, NY, USA: ACM,
  2013, pp. 375--386.

\bibitem{6832464}
A.~Sabharwal, P.~Schniter, D.~Guo, D.~W. Bliss, S.~Rangarajan, and R.~Wichman,
  ``In-band full-duplex wireless: Challenges and opportunities,'' \emph{IEEE
  Journal on Selected Areas in Communications}, vol.~32, no.~9, pp. 1637--1652,
  Sept 2014.

\bibitem{wyner1975wire}
A.~D. Wyner, ``The wire-tap channel,'' \emph{Bell Labs Technical Journal},
  vol.~54, no.~8, pp. 1355--1387, 1975.

\bibitem{6739367}
A.~Mukherjee, S.~A.~A. Fakoorian, J.~Huang, and A.~L. Swindlehurst,
  ``Principles of physical layer security in multiuser wireless networks: A
  survey,'' \emph{IEEE Communications Surveys Tutorials}, vol.~16, no.~3, pp.
  1550--1573, Third 2014.

\bibitem{6542749}
G.~Zheng, I.~Krikidis, J.~Li, A.~P. Petropulu, and B.~Ottersten, ``Improving
  physical layer secrecy using full-duplex jamming receivers,'' \emph{IEEE
  Transactions on Signal Processing}, vol.~61, no.~20, pp. 4962--4974, Oct
  2013.

\bibitem{5352243}
L.~Dong, Z.~Han, A.~P. Petropulu, and H.~V. Poor, ``Improving wireless physical
  layer security via cooperating relays,'' \emph{IEEE Transactions on Signal
  Processing}, vol.~58, no.~3, pp. 1875--1888, March 2010.

\bibitem{5638162}
G.~Zheng, L.~C. Choo, and K.~K. Wong, ``Optimal cooperative jamming to enhance
  physical layer security using relays,'' \emph{IEEE Transactions on Signal
  Processing}, vol.~59, no.~3, pp. 1317--1322, March 2011.

\bibitem{7339654}
L.~Li, Z.~Chen, D.~Zhang, and J.~Fang, ``A full-duplex bob in the mimo gaussian
  wiretap channel: Scheme and performance,'' \emph{IEEE Signal Processing
  Letters}, vol.~23, no.~1, pp. 107--111, Jan 2016.

\bibitem{6787008}
Y.~Zhou, Z.~Z. Xiang, Y.~Zhu, and Z.~Xue, ``Application of full-duplex wireless
  technique into secure mimo communication: Achievable secrecy rate based
  optimization,'' \emph{IEEE Signal Processing Letters}, vol.~21, no.~7, pp.
  804--808, July 2014.

\bibitem{6936336}
F.~Zhu, F.~Gao, M.~Yao, and H.~Zou, ``Joint information- and
  jamming-beamforming for physical layer security with full duplex base
  station,'' \emph{IEEE Transactions on Signal Processing}, vol.~62, no.~24,
  pp. 6391--6401, Dec 2014.

\bibitem{7136146}
S.~Parsaeefard and T.~Le-Ngoc, ``Full-duplex relay with jamming protocol for
  improving physical-layer security,'' in \emph{2014 IEEE 25th Annual
  International Symposium on Personal, Indoor, and Mobile Radio Communication
  (PIMRC)}, Sept 2014, pp. 129--133.

\bibitem{7414075}
S.~Vishwakarma and A.~Chockalingam, ``Sum secrecy rate in miso full-duplex
  wiretap channel with imperfect csi,'' in \emph{2015 IEEE Globecom Workshops
  (GC Wkshps)}, Dec 2015, pp. 1--6.

\bibitem{6189999}
A.~Mukherjee and A.~L. Swindlehurst, ``A full-duplex active eavesdropper in
  mimo wiretap channels: Construction and countermeasures,'' in \emph{2011
  Conference Record of the Forty Fifth Asilomar Conference on Signals, Systems
  and Computers (ASILOMAR)}, Nov 2011, pp. 265--269.

\bibitem{7391133}
M.~R. Abedi, N.~Mokari, H.~Saeedi, and H.~Yanikomeroglu, ``Secure robust
  resource allocation in the presence of active eavesdroppers using full-duplex
  receivers,'' in \emph{2015 IEEE 82nd Vehicular Technology Conference
  (VTC2015-Fall)}, Sept 2015, pp. 1--5.

\bibitem{5961648}
C.~Jeong and I.~M. Kim, ``Optimal power allocation for secure multicarrier
  relay systems,'' \emph{IEEE Transactions on Signal Processing}, vol.~59,
  no.~11, pp. 5428--5442, Nov 2011.

\bibitem{6187727}
F.~Renna, N.~Laurenti, and H.~V. Poor, ``Physical-layer secrecy for ofdm
  transmissions over fading channels,'' \emph{IEEE Transactions on Information
  Forensics and Security}, vol.~7, no.~4, pp. 1354--1367, Aug 2012.

\bibitem{5872025}
X.~Wang, M.~Tao, J.~Mo, and Y.~Xu, ``Power and subcarrier allocation for
  physical-layer security in ofdma-based broadband wireless networks,''
  \emph{IEEE Transactions on Information Forensics and Security}, vol.~6,
  no.~3, pp. 693--702, Sept 2011.

\bibitem{6516879}
H.~Qin, Y.~Sun, T.~H. Chang, X.~Chen, C.~Y. Chi, M.~Zhao, and J.~Wang, ``Power
  allocation and time-domain artificial noise design for wiretap ofdm with
  discrete inputs,'' \emph{IEEE Transactions on Wireless Communications},
  vol.~12, no.~6, pp. 2717--2729, June 2013.

\bibitem{7386164}
M.~Li, Y.~Guo, K.~Huang, and F.~Guo, ``Secure power and subcarrier auction in
  uplink full-duplex cellular networks,'' \emph{China Communications}, vol.~12,
  no. Supplement, pp. 157--165, December 2015.

\bibitem{bertsekas1999nonlinear}
D.~P. Bertsekas, \emph{Nonlinear programming}.\hskip 1em plus 0.5em minus
  0.4em\relax Athena scientific Belmont, 1999.

\bibitem{6177689}
B.~P. Day, A.~R. Margetts, D.~W. Bliss, and P.~Schniter, ``Full-duplex
  bidirectional mimo: Achievable rates under limited dynamic range,''
  \emph{IEEE Transactions on Signal Processing}, vol.~60, no.~7, pp.
  3702--3713, July 2012.

\bibitem{Jain:2011}
M.~Jain, J.~I. Choi, T.~Kim, D.~Bharadia, S.~Seth, K.~Srinivasan, P.~Levis,
  S.~Katti, and P.~Sinha, ``Practical, real-time, full duplex wireless,'' in
  \emph{Proceedings of the 17th Annual International Conference on Mobile
  Computing and Networking}, ser. MobiCom '11.\hskip 1em plus 0.5em minus
  0.4em\relax New York, NY, USA: ACM, 2011, pp. 301--312.

\bibitem{6404659}
V.~Aggarwal, M.~Duarte, A.~Sabharwal, and N.~K. Shankaranarayanan, ``Full- or
  half-duplex? a capacity analysis with bounded radio resources,'' in
  \emph{2012 IEEE Information Theory Workshop}, Sept 2012, pp. 207--211.

\bibitem{taghizadeh2017linear}
O.~Taghizadeh, V.~Radhakrishnan, A.~C. Cirik, R.~Mathar, and L.~Lampe, ``Linear
  transceiver design for bidirectional full-duplex mimo ofdm systems,''
  \emph{arXiv preprint arXiv:1704.04815}, 2017.

\bibitem{7463025}
Y.~Sun, D.~W.~K. Ng, J.~Zhu, and R.~Schober, ``Multi-objective optimization for
  robust power efficient and secure full-duplex wireless communication
  systems,'' \emph{IEEE Transactions on Wireless Communications}, vol.~15,
  no.~8, pp. 5511--5526, Aug 2016.

\bibitem{MCB:08}
K.~B. Petersen and M.~S. Pedersen, ``The matrix cookbook,'' nov 2012, version
  20121115.

\bibitem{Boyd:2004}
S.~Boyd and L.~Vandenberghe, \emph{Convex Optimization}.\hskip 1em plus 0.5em
  minus 0.4em\relax New York, NY, USA: Cambridge University Press, 2004.

\bibitem{5962875}
J.~Jose, N.~Prasad, M.~Khojastepour, and S.~Rangarajan, ``On robust
  weighted-sum rate maximization in mimo interference networks,'' in \emph{2011
  IEEE International Conference on Communications (ICC)}, June 2011, pp. 1--6.

\bibitem{Vandenberghe:1998}
L.~Vandenberghe, S.~Boyd, and S.-P. Wu, ``Determinant maximization with linear
  matrix inequality constraints,'' \emph{SIAM J. Matrix Anal. Appl.}, vol.~19,
  no.~2, pp. 499--533, Apr. 1998.

\bibitem{hunger2005floating}
R.~Hunger, \emph{Floating point operations in matrix-vector calculus}.\hskip
  1em plus 0.5em minus 0.4em\relax Munich University of Technology, Inst. for
  Circuit Theory and Signal Processing Munich, 2005.

\bibitem{4652697}
E.~A. Jorswieck and A.~Wolf, ``Resource allocation for the wire-tap
  multi-carrier broadcast channel,'' in \emph{2008 International Conference on
  Telecommunications}, June 2008, pp. 1--6.

\bibitem{6781609}
D.~W.~K. Ng, E.~S. Lo, and R.~Schober, ``Robust beamforming for secure
  communication in systems with wireless information and power transfer,''
  \emph{IEEE Transactions on Wireless Communications}, vol.~13, no.~8, pp.
  4599--4615, Aug 2014.

\bibitem{6353396}
M.~Duarte, C.~Dick, and A.~Sabharwal, ``Experiment-driven characterization of
  full-duplex wireless systems,'' \emph{IEEE Transactions on Wireless
  Communications}, vol.~11, no.~12, pp. 4296--4307, December 2012.

\bibitem{cvx}
M.~Grant and S.~Boyd, ``{CVX}: Matlab software for disciplined convex
  programming, version 2.1,'' Mar. 2014.

\bibitem{gb08}
------, ``Graph implementations for nonsmooth convex programs,'' in
  \emph{Recent Advances in Learning and Control}, ser. Lecture Notes in Control
  and Information Sciences, V.~Blondel, S.~Boyd, and H.~Kimura, Eds.\hskip 1em
  plus 0.5em minus 0.4em\relax Springer-Verlag Limited, 2008, pp. 95--110.

\end{thebibliography}

}    
%\end{thebibliography}

% that's all folks

\end{document}